\begin{document}
\title{Mechanism Design for Exchange Markets\thanks{
  This work is supported by 
  Wuhan East Lake High-Tech Development Zone (also known as the Optics Valley of China, or OVC) National Comprehensive Experimental Base for Governance of Intelligent Society, 
  NSFCs(No. 12471339, No. 62302166 and No. 62172012), 
  and the Key Laboratory of Interdisciplinary Research of Computation and Economics (SUFE), Ministry of Education.
}}

\author{
Yusen Zheng
\inst{1,2}\textsuperscript{(\Letter)}
\and
Yukun Cheng
\inst{3}\textsuperscript{(\Letter)}
\and
Chenyang Xu
\inst{4}\textsuperscript{(\Letter)}
\and
Xiaotie Deng
\inst{1,2}\textsuperscript{(\Letter)}
}

\authorrunning{Yusen Zheng, Yukun Cheng, Chenyang Xu, and Xiaotie Deng}

\institute{
CFCS, School of Computer Science, Peking University, Beijing 100871, China
\and
PKU-WUHAN Institute for Artificial Intelligence, Wuhan 430075, China
\and
School of Business, Jiangnan University, Wuxi 214122, China
\and
Software Engineering Institute, East China Normal University, China\\
\email{yusen@stu.pku.edu.cn, ykcheng@amss.ac.cn, cyxu@sei.ecnu.edu.cn, xiaotie@pku.edu.cn}
}
\maketitle              %

\begin{abstract}
Exchange markets are a significant type of market economy, in which each agent holds a budget and certain (divisible) resources available for trading. Most research on equilibrium in exchange economies is based on an environment of completely free competition. However, the orderly operation of markets in reality also relies on effective economic regulatory mechanisms. This paper initiates the study of the mechanism design problem in exchange markets, exploring the potential to establish truthful market rules and mechanisms. This task poses a significant challenge as unlike auctioneers in auction design, the mechanism designer in exchange markets lacks centralized authority to fully control the allocation of resources.
  
In this paper, the mechanism design problem is formalized as a two-stage game. In stage 1, agents submit their private information to the manager, who then formulates market trading rules based on the submitted information. In stage 2, agents are free to engage in transactions within these rules, ultimately reaching an equilibrium. We generalize the concept of liquid welfare from classical budget-feasible auctions and use \emph{market liquid welfare} as a measure to evaluate the performance of the designed mechanism. Moreover, an extra concept called \emph{profitability} is introduced to assess whether the market is money-making (profitable) or money-losing (unprofitable). Our goal is to design a truthful mechanism that achieves an (approximate) optimal welfare while minimizing unprofitability as much as possible. Two mechanisms for the problem are proposed. The first one guarantees truthfulness and profitability while approaching an approximation ratio of $1/2$ in large markets. The second one is also truthful and achieves $1/2$ approximation in general markets but incurs bounded unprofitability. Our aim is for both mechanisms to provide valuable insights into the truthful market design problem.

\keywords{Exchange Market  \and Budget Constraints \and Incentive Compatibility.}

\end{abstract}

\section{Introduction}

Today we are witnessing a renewed interest about models for exchanging services and resources in markets. The agents in the market endow with a certain amount of money and resources for trading, Any two agents, as long as they are willing, can participate in transactions where one purchases resources from the other using money.
From the traditional stock trading in securities markets to the exchange of data\cite{georgiadis2015exchange}, bandwidth \cite{wu2007proportional}, and computational power \cite{misra2010incentivizing} in information networks, and onto the securitization of NFTs in blockchain systems \cite{chen2022absnft}, all bear witness to the powerful role of the exchange economy.
Numerous related works have focused on analyzing the existence and the computation of equilibria \cite{arrow1959stability,arrow1958stability,deng2002complexity}, which are typically achieved in
a fully competitive environment. However, in reality, markets are not completely free and require regulation and governance through certain economic control measures by market managers.
Thus this paper adopts a fresh perspective in examining the exchange market by looking at it through the lens of a market manager, investigating the potential to establish truthful market rules and mechanisms to promote market prosperity.

Truthful mechanism design is one of the fundamental research topics in game theory. There is a vast body of literature in this field, covering various game scenarios such as auction design~\cite{myerson1981optimal,vickrey1961counterspeculation,clarke1971multipart,groves1973incentives}, scheduling design~\cite{christodoulou2009mechanism,chen2016efficient,heydenreich2007games}, and matching design~\cite{chen2011mechanism,budish2012matching}. In most of these scenarios, mechanism designers are typically considered to have the authority to allocate resources and manage finances. For example, in auction design, the auctioneer has the power to determine which bidders receive the items and how much money they are charged for them.

In contrast, such centralized authority is lacking in a free exchange market. As market managers, we may have the ability to set the prices of goods and define the maximum and minimum transaction limits for each agent. For instance, in the regulation of securities trading markets, mechanisms such as "circuit breakers" or "position limits" are commonly used to control the trading volume of participants. Additionally, these markets offer investors timely stock price information via "real-time quotes." But, we cannot enforce the allocation of goods for each agent. This limitation poses a significant challenge in characterizing the market design model and developing mechanisms that ensure truthfulness.

\subsection{Our Contributions}

In this paper, we formalize the market design problem by precisely defining the powers of the mechanism designer and the social objectives. As illustrated in~\cref{fig:framework}, we model the resource exchange process in the exchange market as a two-stage game.
In the first stage, each agent, acting as the leader, reports their private information to the market manager. Using the reported information, the market manager employs a mechanism to establish the transaction limits for each agent, which represent the maximum number of resource units an agent is permitted to buy or sell. Additionally, the market manager sets the distinct price $\lambda_i$ that an agent must pay or will receive for each unit of resource exchange.

\begin{figure}[ht]
    \centering
    \includegraphics[width=0.6\linewidth]{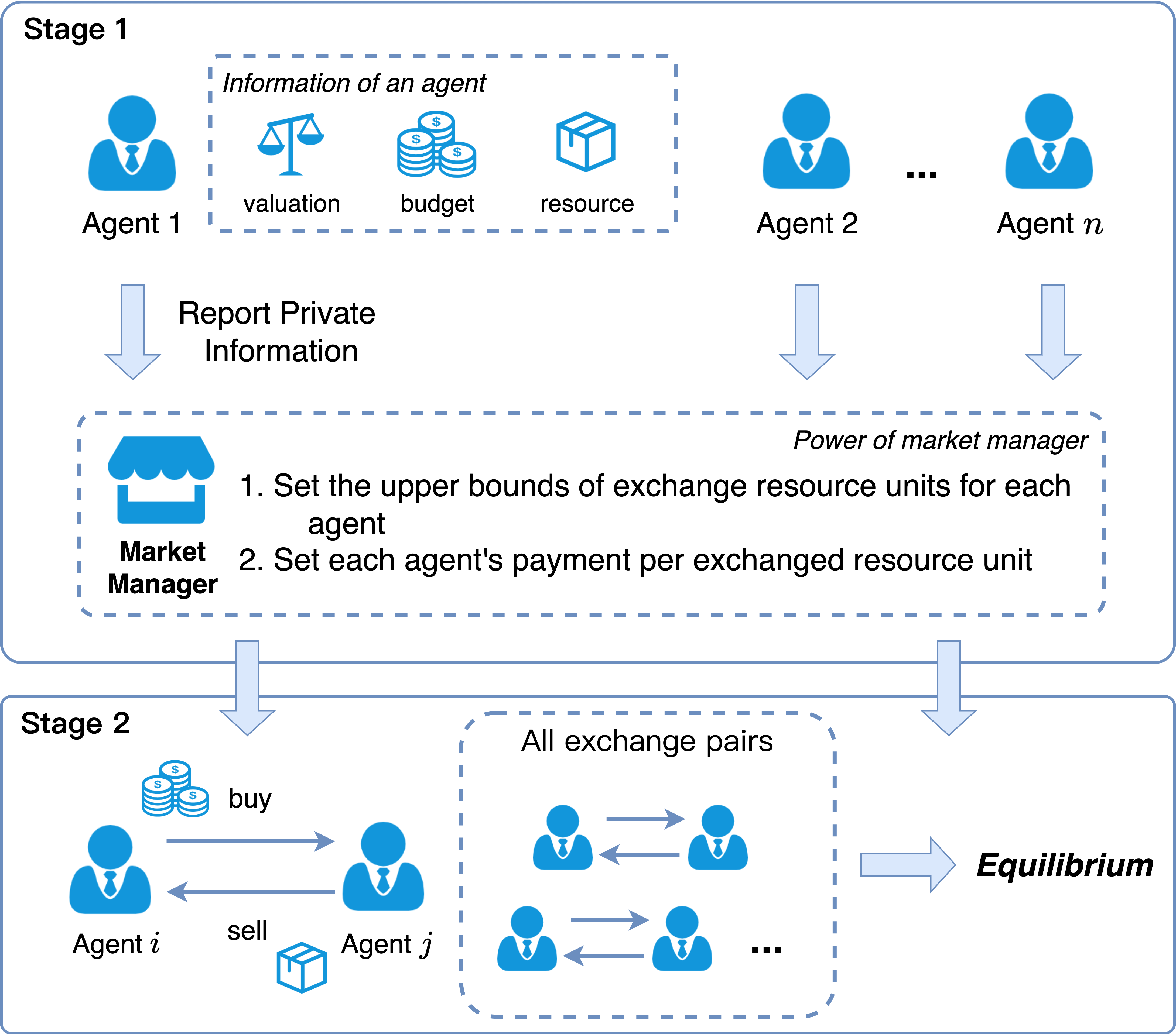}
    \caption{The 2-stage game framework in the exchange market.}
    \label{fig:framework}
\end{figure}

In the second stage, each agent acts selfishly under the constraints set in the first stage. To be specific, an agent $i$ can initiate a resource exchange with any other agent $j$. After the exchange, one agent gains resources and makes a monetary payment. Conversely, the other agent loses resources and receives money based on her price. It is crucial to note that the payment discrepancy between the two agents might not be zero. This discrepancy could be viewed as the market applying taxes or offering subsidies on the transactions.
We are not particularly concerned with the buying or selling units in each exchange, but rather focus on the eventual 'net' trading unit $x_i$ of each agent. %
Therefore, we can use the trading unit profile ${\bf x}=\{x_i\}_{i}$ to characterize a state. The total valuation and payment for each agent can be calculated by multiplying $x_i$ by their value $v_i$ and price $\lambda_i$, respectively. Due to their inherent selfishness, each agent strives to maximize their utility by engaging actively in resource exchanges. At the same time, the market eventually reach an equilibrium, where no agent is motivated to engage in further purchases or sales.

Lu et al. \cite{lu2017liquid} noted that achieving any reasonable guarantee for the social welfare objective is impossible, even in the simplest setting of a single-item budget-feasible auction. Given that the budget-feasible auction is a special case of the exchange market where the auctioneer is the unique seller, this observation motivates us to explore the concept of liquid welfare within the exchange market. Unlike budget-feasible auctions, where liquid welfare hinges solely on buyers, in an exchange market, each agent can act both as a seller and a buyer. Consequently, the calculation of social welfare must encompass the contributions from both sellers and buyers. To address this, we introduce the concept of {\it market liquid welfare} for the exchange market. Essentially, each agent's contribution to the market liquid welfare comprises the total valuation of all their original resources plus the valuation for the allocated bundle capped by their budgets.
Overall, our contributions are summarized as follows:
\begin{itemize}[left=1em]
    \item We propose a model to address the mechanism design problem in exchange markets. In this model, agents in the market first report their private information, after which the market manager formulates market trading rules. Subsequently, agents engage in transactions freely within these rules. Each agent can be viewed as performing a bi-level optimization (\cref{def:market_utility}), where reporting different private information leads to changes in the rules and the overall market supply-demand relationship.
    
    On the other hand, we generalize the concept of liquid welfare from classical budget-feasible auctions and propose market liquid welfare (\cref{def:MLW}) as a measure to evaluate the performance of market rules. Additionally, considering the real-world attribute of markets where they can either generate profits or incur losses, we introduce the concept of profitability (\cref{def:profit}) to characterize this aspect.
    \vspace{0.5em}
    \item We characterize the optimal solution to the market design problem and demonstrate that there always exists a profitable (uniform price) mechanism that achieves the optimal market liquid welfare (\cref{lem:mop}). However, this mechanism suffers from a truthfulness issue. Therefore, we refine the mechanism and propose a truthful uniform price mechanism (\cref{thm:large_market}). Under a large market assumption, the mechanism's approximation ratio approaches $1/2$. This result can further be extended to the multi-parameter setting where budgets and initially held resources are private (\cref{thm:uniform_large_mp}).
    \vspace{0.5em}
    \item Without the large market assumption, maintaining truthfulness, profitability, and a good approximation ratio presents a significant challenge. To address this, we propose a differential pricing mechanism that ensures truthfulness and achieves a 1/2-approximation ratio (\cref{thm:mechs}). While this mechanism may lead to a money-losing market, we can prove that the unprofitability of the mechanism is bounded.
    The basic algorithmic idea is similar to the budget-feasible mechanism proposed in~\cite{dobzinski2014efficiency}, but it is important to note that the "bi-level" nature of our model and the dynamic nature of market supply and demand pose new challenges for analyzing truthfulness and approximation ratios.
\end{itemize}

\subsection{Related Work}
\subsubsection*{Exchange Markets.}  For the exchange market, much of the research has traditionally focused on market equilibrium, particularly on its existence and computation. Arrow and Debreu \cite{arrow1954existence} pioneered this area by proving the existence of equilibrium in an Arrow–Debreu market, assuming concave utility functions. Subsequent studies, such as those referenced by Deng et al. \cite{deng2002complexity}, Duan and Mehlhorn \cite{duan2016improved}, Jain et al. \cite{jain2007polynomial}, and Ye \cite{ye2008path}, have primarily employed combinatorial methods to tackle the computational challenges of determining market equilibrium.
In a different vein, Hurwicz \cite{hurwicz1986informational} explored the incentive structures within market equilibrium, revealing the absence of any incentive-compatible mechanism that ensures market equilibrium. Addressing this limitation, Roberts and Postlewaite \cite{roberts1976incentives} developed a theory that supports the price-taking behavior in large exchange economies. They demonstrated that agents in an infinitely large market behave nearly truthfully. Jackson \cite{jackson1992incentive} further bolstered this theory by proving that the collective demand function of strategic (i.e., cheating) agents converges to the total demand of price-taking agents as the population size increases.
More recently, Cheng et al. have provided the first mathematical proof demonstrating that the market equilibrium mechanism remains truthful against manipulative strategies \cite{cheng2023truthfulness} and explore the incentive ratio of the market equilibrium mechanism against Sybil attack \cite{cheng2022tight}.

\subsubsection*{Budget-Feasible Auctions.}
In scenarios with budget constraints, the concept of Pareto Efficiency often serves as a foundation for assessing outcomes in economic models and is widely discussed in literature \cite{dobzinski2012multi,fiat2011single}. Nonetheless, Pareto Efficiency, being a binary criterion, does not facilitate the development of approximation algorithms since it solely indicates whether any possible improvement exists without quantifying how close an outcome is to optimal.
Recognizing this limitation, Dobzinski et al. introduced a more quantifiable welfare metric known as liquid welfare \cite{dobzinski2014efficiency}. This metric innovatively incorporates both the agents' valuations of items and their purchasing powe. They further leveraged this metric to design a mechanism that achieves a 2-approximation ratio for auctions involving public budgets. Building on this approach, Lu et al. improved this approximation ratio to $\frac{1+\sqrt{5}}{2}$, demonstrating a more efficient mechanism for handling such complex auction scenarios \cite{lu2015improved}.
Further advancements were made by Lu et al. who applied the concept of liquid welfare to multi-item auctions with budgets, exploring its implications and utility in more diversified auction frameworks \cite{lu2017liquid}. 

\subsubsection*{Large Market Assumption.}
The large market assumption is a widely-used assumption in budget-feasible auction design. It was first formally described in~\cite{anari2014mechanism}. 
Intuitively, a large market refers to a market with a large number of participants, where each participant's contribution to the total market is relatively small. 
By leveraging this assumption, the authors presented a mechanism for crowdsourcing with a competitive ratio of $1 - 1/e$, surpassing the theoretical upper limit of $0.5$ for the competitive ratio of randomized mechanisms without the large market assumption. Under the large market assumption, Lu and Xiao~\cite{lu2017liquid} proposed a truthful budget-feasible mechanism with a constant approximation ratio for multi-item auctions.

\subsection{Paper Organization}
In~\cref{sec:model}, we formally introduce our model. Subsequently, in~\cref{sec:mop}, we characterize the structure of the model's optimal solutions. \cref{sec:mp} presents a uniform price mechanism under the large market assumption, while \cref{sec:sp} addresses a more general setting and proposes a truthful mechanism. The paper concludes in~\cref{sec:con}.

\section{Model}\label{sec:model}
This section provides a formal definition of the market design model. An exchange market consists of $n$ agents,  where each agent $i$ possesses $\Gamma_i$ units of (divisible) resources available for trading. Each agent $i$ is assigned a budget $B_i$, representing the amount of money she can allocate for purchasing resources, and a (private) valuation $v_i$ for one unit of the resource.

As shown in~\cref{fig:framework}, there are two stages. In the first stage, each agent reports their private valuation $b_i$ to the market manager. Then, based on the reported valuations $\{b_i\}_{i\in [n]}$ and the publicly available information $\{B_i, \Gamma_i\}_{i\in [n]}$, the market manager formulates the market rules, defining the exchange constraints for each agent. In the second stage, under the satisfaction of the exchange constraints, agents selfishly engage in resource exchanges with each other. 

\subsubsection*{Exchange Constraints.} Denote by $x_i$ the \emph{net}\footnote{It is possible for an agent to initially purchase $a$ units and subsequently sell $b$ units. However, we only consider the net change: $x_i=a-b$. } amount of resources agent $i$ ultimately purchases in the market and by $p_i$ the \emph{net} amount of money agent~$i$ ultimately spends (if $x_i,p_i<0$, it indicates that the agent sells $|x_i|$ resource units and receives $|p_i|$ money). The exchange constraints for each agent can be represented by an interval $I_i=[-l_i,r_i]$ with $l_i,r_i\geq 0$, which restricts the range of $x_i$, and a value $\lambda_i\geq 0$, representing the unit price at which the agent buys or sells resources, i.e., $p_i = \lambda_i \cdot x_i$. 

To illustrate, suppose the market sets the constraint $(I_i = [-50, 100], \lambda_i = 10)$ for agent $i$. This means that the agent is allowed to sell at most 50 units of resources ``net'', and purchase up to 100 units of resources ``net''. For every resource unit the agent buys (from any other agent), she needs to pay \$10; Conversely, for every resource unit she sells (to any other agent), she will receive \$10. 

Note that the market does not limit the number of times trading in transactions agents can engage in, as long as they finally satisfy the constraints. When agent $i$ purchases $x$ units of resources from agent $j$, agent $i$ needs to pay $\lambda_i \cdot x$ while agent $j$ will receive $\lambda_j \cdot x$ as payment.
If $\lambda_i > \lambda_j$, the payment difference can be seen as a \emph{transaction tax} imposed by the system. Conversely, if $\lambda_i < \lambda_j$, the difference can be viewed as a \emph{subsidy} provided by the system for that transaction\footnote{This is a common strategy~\cite{pigou2017economics} to activate a market.}

Let $\vA=\{b_i,B_i,\Gamma_i\}_{i\in [n]}$ be a set of agent parameters. A market mechanism is a function $\cM$ that maps $\vA$ to a set of exchange constraints $\{(I_i, \lambda_i)\}_{i\in [n]}$.

\subsubsection*{Agent Utility.} Each agent uses a constrained quasi-linear utility. When agent~$i$ buys $x_i$ resource units and pays $p_i$, she gets the utility
$$
u_i (x_i,p_i):= \left\{
\begin{aligned}
&v_i \cdot x_i - p_i \;\;\;\; & \text{if $p_i\leq B_i$ and $x_i \geq -\Gamma_i$ ,}   \\
&-\infty\;\;\;\; & \text{otherwise.} \\
\end{aligned}
\right.
$$
In this work, we assume that agents are \emph{resource-preferable}, having a preference for acquiring resources rather than holding money. This means that when $u_i(x_i, p_i)=u_i(x_i',p_i')$ and $x_i <x_i'$, the agent prefers $(x_i',p_i')$ as it provides her with more units of resources.

Clearly, given exchange constraints $(\vI,\vlam)=\{ (I_i,\lambda_i) \}_{i\in [n]}$, an agent with $v_i \geq \lambda_i$ becomes a buyer, aiming to purchase as many resources as possible. On the other hand, an agent with $v_i < \lambda_i$ becomes a seller, seeking to sell as many resources as possible. Thus, when agents act selfishly, the exchange market must eventually reach an equilibrium market state $(\vx,\vp)=\{(x_i,p_i)\}_{i\in [n]}$ where either no buyer is willing to buy or no seller has remaining resources to sell. More formally, we define the following:

\begin{definition}[Reachability]\label{def:reach}
     Given an agent set $\vA=\{(v_i,B_i,\Gamma_i)\}_{i\in [n]}$ and exchange constraints $(\vI,\vlam)$, a market state $(\vx,\vp)=\{(x_i,p_i)\}_{i\in [n]}$ is \emph{reachable} to $(\vI,\vlam)$ if the following conditions are satisfied: 
    \begin{enumerate}[left=3em]
        \item[(1)] $\vx$ is self-consistent: $\sum_{i\in [n]}x_i = 0$.
        \item[(2)] $(\vx,\vp)$ is feasible: $\forall i\in [n]$, $x_i\in I_i$ and $p_i=\lambda_i \cdot x_i$.
        \item[(3)] $(\vx,\vp)$ is an equilibrium: either no buyer is willing to buy ($ p_i = B_i$ for any buyer $i$ with $v_i\geq \lambda_i$), or no seller has remaining resources to sell ($x_j = -\Gamma_j$ for any seller $j$ with $v_i<\lambda_j$).
    \end{enumerate}
     
\end{definition}

There could be more than one reachable market state $(\vx,\vp)$, depending on the order in which exchanges occur within the market. Consider a market mechanism $\cM$. We use $\spi(\cM,\vb, \vA )$ to denote the set of reachable states when agents in $\vA$ report a valuation profile $\vb=\{b_i\}_{i\in [n]}$ and $\cM$ is used to establish exchange constraints for them.

Since agents cannot foresee which equilibrium state will be reached in stage~2 when they report valuations $\vb$, we employ the worst-case analysis framework, and assume that agents are conservative, calculating their bidding strategies based on the minimum utility they can obtain.

\begin{definition}[Agent Market Utility]\label{def:market_utility}
    Given a mechanism $\cM$, an agent set~$\vA$, and a reported valuation profile $\vb$, the market utility of agent $i$ is defined as her minimum utility among all reachable states: 
    \[ U_i = \min_{(\vx,\vp)\in \spi(\cM,\vb, \vA)} u_i(\vx,\vp). \]
\end{definition}

\begin{definition}[Truthful Market]\label{def:truthful}
    A market mechanism $\cM$ is considered \emph{truthful} if for each agent, reporting the private valuation truthfully always maximizes her market utility, regardless of the reports submitted by other agents.
\end{definition}

We further use the worst-case framework to evaluate the performance of a mechanism.

\begin{definition}[Profitability]\label{def:profit}
   A market mechanism $\cM$ is considered \emph{profitable} if for any agent set $\vA$ with reported valuations $\vb$ and any reachable state $ (\vx,\vp)\in \spi(\cM,\vb, \vA)$, we have  $\sum_{i\in [n]} p_i \geq 0$; otherwise, it is referred to as a \emph{unprofitable} mechanism.
\end{definition}

The profitability metric assesses whether a mechanism results in a money-making or a money-losing market. If the mechanism is unprofitable, it means that on average, the market manager needs to subsidize each transaction. 

\begin{definition}[Market Liquid Welfare]\label{def:MLW} 
   Given a mechanism $\cM $ and an agent set $\vA=\{(v_i,B_i,\Gamma_i)\}_{i\in [n]}$ with reported valuations $\vb$, the market liquid welfare of a reachable state $(\vx,\vp)\in \spi(\cM,\vb,\vA)$ is defined as 
   \[ \mlw(\vx) = \sum_{i\in [n]} v_i\cdot \Gamma_i + \min\{v_i\cdot x_i, B_i\}. \]
   Further, let the market liquid welfare returned by mechanism $\cM$ be the worst reachable $\mlw(\vx)$: 
   \[ \mlw(\cM,\vb, \vA) = \min_{(\vx,\vp)\in \spi(\cM,\vb,\vA)} \mlw(\vx). \]
\end{definition}

Optimizing the market liquid welfare ensures that resources flow towards the agents with high valuations while also preventing those agents with limited budgets from obtaining excessive resources.
Our goal is to design a truthful and profitable mechanism that maximizes the market liquid welfare.

\section{Market Optimal Price}\label{sec:mop}

Compared to classical auction design, profitability poses a new challenge in market design. It can be observed that if a mechanism sets a uniform resource price for all agents, it is guaranteed to be profitable as no transaction needs to be subsidized.
Refer to such mechanisms as \emph{uniform-price}.
To gain algorithmic insight into market design, we start by temporarily setting aside the issue of truthfulness and exploring the best market liquid welfare achievable by uniform-price mechanisms.

\begin{lemma}[Market Optimal Price]\label{lem:mop}
    Given any agent set $\vA$, there always exists an exchange interval $I_i^*$ for each agent and a uniform \emph{market optimal price} (MOP) $ \lambda^* $ such that the optimal market liquid welfare can always be obtained when each agent $i$ subjects to constraint $(I_i^*, \lambda^*)$.
\end{lemma}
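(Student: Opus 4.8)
The plan is to prove the statement in two parts: first, that for any fixed uniform price $\lambda$, there is a canonical ``best'' reachable state whose market liquid welfare dominates that of any mechanism (uniform-price or not) at comparable resource flow; and second, that optimizing over the single scalar $\lambda$ suffices to recover the global optimum. I would start by fixing an arbitrary agent set $\vA$ and considering what the optimal (unconstrained) market liquid welfare looks like. Since $\mlw(\vx) = \sum_i v_i \Gamma_i + \sum_i \min\{v_i x_i, B_i\}$ and the first term is a constant independent of the allocation, maximizing $\mlw$ is equivalent to maximizing $\sum_i \min\{v_i x_i, B_i\}$ subject only to $\sum_i x_i = 0$ and $x_i \geq -\Gamma_i$. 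This is a clean concave maximization: each term $\min\{v_i x_i, B_i\}$ is concave and nondecreasing in $x_i$, saturating once $x_i \geq B_i/v_i$.

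Next I would characterize the optimizer of this concave program. The natural guess is a greedy/water-filling structure: sort agents by $v_i$; resources should flow from low-$v_i$ agents (who give up resources down to $-\Gamma_i$, contributing nothing further once they hit the cap — note $\min\{v_i x_i, B_i\}$ for $x_i < 0$ just decreases, so actually the selling side contributes $v_i x_i$ which is negative, meaning we only want low-$v_i$ agents selling) toward high-$v_i$ agents, each of whom should be filled up to their budget-saturation point $B_i/v_i$ before moving to the next. The KKT conditions of the program will pin down a threshold value $\lambda^*$: agents with $v_i > \lambda^*$ want to buy up to saturation, agents with $v_i < \lambda^*$ want to sell down to $-\Gamma_i$, and $\lambda^*$ is exactly the marginal value at which supply meets demand. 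I would then set $\lambda^* $ to be this KKT multiplier and define $I_i^* = [-\Gamma_i, B_i/v_i]$ for buyers (or more carefully, intervals that force exactly the optimal $x_i^*$), $I_i^* = [-\Gamma_i, 0]$ for sellers, and degenerate intervals for the marginal agent(s) so that the allocation is essentially forced.

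The crucial remaining step — and I expect this to be the main obstacle — is to show that with these constraints $(I_i^*, \lambda^*)$, \emph{every} reachable state (per \cref{def:reach}) achieves the optimal $\mlw$, so that the worst-case $\mlw(\cM, \vb, \vA)$ equals the optimum. This requires arguing that the equilibrium condition (3) plus feasibility forces the net allocation to coincide with $\vx^*$ regardless of trade order. The subtlety is that the equilibrium condition is a disjunction (``no buyer willing to buy'' OR ``no seller has resources left''), and I need the interval endpoints chosen so that both disjuncts, combined with $\sum_i x_i = 0$, yield the same welfare — in particular I must handle the marginal agent whose $v_i$ equals $\lambda^*$ (who is indifferent) by making sure their interval is set so their contribution is pinned down, or by arguing any split among indifferent agents yields the same $\mlw$ since they all contribute at the same marginal rate $\lambda^*$. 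I would also need to verify the edge case where total demand $\sum_{v_i > \lambda} B_i/v_i$ and total supply $\sum_{v_i \le \lambda}\Gamma_i$ cannot be exactly balanced at any price, in which case $\lambda^*$ sits at a ``jump'' and one side is rationed; here the interval on the rationed marginal agent must be tightened to $[-\Gamma_i, c_i]$ (or $[-c_i, r_i]$) with $c_i$ the exact optimal net trade. Once the allocation is pinned down, profitability is immediate since a single price means $\sum_i p_i = \lambda^* \sum_i x_i = 0 \geq 0$, and optimality of $\mlw$ follows from the concave-program analysis, completing the proof.
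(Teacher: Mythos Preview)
Your proposal is correct and takes essentially the same approach as the paper: start from the optimal allocation $\vx^*$, set $\lambda^*$ to the threshold valuation separating buyers from sellers, choose tight intervals $[0,x_i^*]$ or $[x_i^*,0]$ so that total buyer demand equals total seller supply, and handle the marginal agents with $v_i=\lambda^*$ by a case split that adjusts their intervals. The paper streamlines your ``crucial remaining step'' by first isolating an Equilibrium-Unique Property lemma (when constrained demand equals constrained supply, the reachable state is unique), which is exactly the argument you sketch inline; your KKT framing for $\lambda^*$ is equivalent to the paper's direct choice of $\lambda^*$ as the minimum buyer valuation in $\vx^*$.
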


To streamline the proof, we begin by claiming that when the given $\{(I_i,\lambda_i)\}_{i\in[n]}$ satisfies an \emph{equilibrium-unique property}, the behaviors of selfish agents become controllable, resulting in a bounded market liquid welfare. Due to space limitations, we defer the proof to~\cref{sec:leem:eq_pro}.

\begin{lemma}[Equilibrium-Unique Property]\label{lem:equilibrium_property}
    Given a set of agents $\vA= \{(v_i,B_i,\Gamma_i)\}_{i\in [n]}$ and exchange constraints $\{(I_i=[-l_i,r_i],\lambda_i)\}_{i\in[n]}$,
    we refer to the set of agents with $v_i\geq \lambda_i$ as the buyer set $\cB$ and the set of other agents as the seller set $\cS$. If  
     \[ \sum_{i\in \cB} \min\left\{r_i, \frac{B_i}{\lambda_i}\right\} = \sum_{j\in \cS} \min\left\{l_j, \Gamma_j\right\}  \]
    then the reachable state $(\vx,\vp)$ is unique, with $x_i= \min\left\{r_i, \frac{B_i}{\lambda_i}\right\} $ $\forall i\in \cB$  and $x_j = - \min\left\{l_j, \Gamma_j\right\}$ $\forall j\in \cS$. Further, the market liquid welfare is 
    \[ \sum_{i\in \cB} \left( v_i\cdot \Gamma_i + \min\{v_i\cdot r_i, B_i\}   \right) + \sum_{j\in \cS} v_j \cdot \max\{0, \Gamma_j - l_j\}. \]

\end{lemma}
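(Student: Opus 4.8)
The plan is to establish the three assertions — uniqueness of the reachable state, the explicit values of $\vx$, and the market liquid welfare formula — treating the first two together and then substituting. Throughout write $d_i := \min\{r_i, B_i/\lambda_i\}$ for the maximal net purchase of a buyer $i\in\cB$ and $s_j := \min\{l_j,\Gamma_j\}$ for the maximal net sale of a seller $j\in\cS$, so that the hypothesis reads $\sum_{i\in\cB} d_i = \sum_{j\in\cS} s_j$. First I would extract, from the feasibility requirement in \cref{def:reach} together with the finiteness of utility, the one-sided bounds that hold in \emph{every} reachable state: for a buyer $i$ the interval gives $x_i\le r_i$ and the budget gives $\lambda_i x_i = p_i\le B_i$, hence $x_i\le d_i$; symmetrically, for a seller $j$ the interval gives $x_j\ge -l_j$ and the endowment gives $x_j\ge -\Gamma_j$, hence $x_j\ge -s_j$. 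These are bounds in one direction only; the opposite directions will be forced globally.

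The core step combines self-consistency ($\sum_i x_i = 0$) with the disjunctive equilibrium condition. In the branch where every buyer has exhausted her budget, $x_i = B_i/\lambda_i$ for all $i\in\cB$; together with $x_i\le r_i$ this gives $x_i = d_i$, so $\sum_{i\in\cB} x_i = \sum_{i\in\cB} d_i = \sum_{j\in\cS} s_j$ by the hypothesis. Self-consistency then yields $\sum_{j\in\cS} x_j = -\sum_{j\in\cS} s_j$, and since each $x_j\ge -s_j$, equality of the sums forces $x_j = -s_j$ for every seller. The symmetric branch (every seller depleted, $x_j=-\Gamma_j$) closes identically: $x_j=-\Gamma_j$ with $x_j\ge -l_j$ gives $x_j=-s_j$, whence $\sum_{i\in\cB} x_i = \sum_{i\in\cB} d_i$ and the upper bounds $x_i\le d_i$ force $x_i = d_i$. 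Either way $\vx$ equals the claimed profile, and since $p_i=\lambda_i x_i$ is pinned down by $x_i$, the full state $(\vx,\vp)$ is unique. I would also remark that this profile satisfies conditions (1)--(3), so the reachable set is nonempty.

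It remains to substitute into \cref{def:MLW}. For a buyer, a short case split on whether $r_i$ or $B_i/\lambda_i$ is smaller shows $\min\{v_i d_i, B_i\} = \min\{v_i r_i, B_i\}$: when the budget binds one uses $v_i\ge\lambda_i$ to see both quantities equal $B_i$, and when the interval binds they agree trivially; this recovers the buyer term $v_i\Gamma_i + \min\{v_i r_i, B_i\}$. For a seller, $x_j = -s_j\le 0$ gives $\min\{v_j x_j, B_j\} = v_j x_j = -v_j s_j$, and the identity $\Gamma_j - \min\{l_j,\Gamma_j\} = \max\{0,\Gamma_j - l_j\}$ turns the seller term into $v_j\max\{0,\Gamma_j - l_j\}$, so summing over $\cB$ and $\cS$ reproduces the stated formula.

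The delicate point I anticipate is the equilibrium step rather than the welfare arithmetic. The condition is a disjunction, and at the boundary where an interval limit binds strictly before the budget (for a buyer) or before the endowment (for a seller), the literal parentheticals $p_i=B_i$ and $x_j=-\Gamma_j$ need to be read through the underlying ``no buyer willing to buy / no seller able to sell'' meaning so that the claimed profile is indeed the equilibrium and the forcing argument still closes. The key is that the one-sided feasibility bounds, once paired with the balance hypothesis, leave no slack in the summed constraint; this is what lets me avoid invoking any monotone best-response or resource-preferability argument and conclude exactness directly.
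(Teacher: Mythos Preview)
Your proposal is correct and takes essentially the same approach as the paper, whose own proof is a two-sentence sketch (supply equals demand, so at equilibrium every buyer and seller hits her cap, which pins down $\vx$ and hence the welfare). Your version is simply more careful: you make the one-sided feasibility bounds explicit, handle both branches of the disjunctive equilibrium condition, verify nonemptiness, carry out the $\min\{v_i d_i,B_i\}=\min\{v_i r_i,B_i\}$ case split, and rightly flag the one genuine subtlety---reading ``$p_i=B_i$'' as ``buyer saturated'' when the interval binds before the budget---that the paper silently elides.
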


\begin{proof}[Proof of~\cref{lem:mop}]

    Consider the optimal resource distribution $\vx^*$ that obtains the maximum $\mlw(\vx^*)$.  
    The basic idea of this proof is to show that there exist exchange intervals $\{I_i^*\}_{i\in [n]}$ and price $\lambda^*$ that satisfies the equilibrium property and reaches $\vx^*$.

    Refer to the set of agents with $x_i^*\geq 0$  as the buyer set $\cB^*$ and the set of other agents as the seller set $\cS^*$. It can be observed that for any buyer $i\in \cB^*$ and seller $j\in \cS^*$, we have $v_i \geq v_j$; otherwise, shifting resources from $i$ to $j$ can increase the market liquid welfare.
    
    Without loss of generality, we can assume that any agent $i\in \cB^*$ buys at most $\frac{B_i}{v_i}$ units of resources in the optimal solution because, beyond this upper bound, the agent's contribution to the market liquid welfare will not increase further.

     For each buyer $i$ and seller $j$, let $I_i^* = [0,x_i^*]$ and $I_j^*=[x_j^*,0]$.
     Define $\lambda^*$ as the minimum valuation of buyers. Since $x_i^*\leq \frac{B_i}{v_i} \leq \frac{B_i}{\lambda^*}$ for each buyer $i\in \cB^*$, we have
     \[ \sum_{i\in \cB^*} \min\left\{x_i^*, \frac{B_i}{\lambda^*}\right\} = \sum_{i\in \cB^*} x_i^* = \sum_{i\in \cS^*} -x_i^* = \sum_{i\in \cS^*} \min\{-x_i^*, \Gamma_i\},  \]
     where the last two equalities used the fact that $\vx^*$ is self-consistent and feasible.

    Although the equation above is similar to the condition in~\cref{lem:equilibrium_property}, they are not exactly the same due to the definitions of $(\cB,\cS)$. In the seller set $\cS^*$ of the optimal solution, it is possible to have agents with $v_i = \lambda^*$. Therefore, we need to discuss this specific case separately. Let $\cP$ represent the set of agents with $v_i = \lambda^*$. Define the difference $\Delta:=  \sum_{i\in \cS^* \setminus \cP} \min\{-x_i^*, \Gamma_i\}- \sum_{i\in \cB^*\setminus \cP} \min\left\{x_i^*, \frac{B_i}{\lambda^*}\right\}$. We do the following refinement:

    \begin{itemize}[left=3em]
        \item If $\Delta \geq 0 $, reset the interval of each agent $i\in \cP$ such that $I_i^*=[0, r_i^*]$ with $r_i^* \leq B_i/\lambda^*$ and $\sum_{i\in \cP} r_i^* = \Delta$. 
        \vspace{0.5em}
        \item If $\Delta < 0 $, we first increase the price $\lambda^*$ slightly so that the minimum valuation in $\cB^* \setminus \cP$ remains (strictly) greater than it. Then reset the interval of each agent $i\in \cP$ such that $I_i^*=[-l_i^*, 0]$ with $l_i^* \leq \Gamma_i$ and $\sum_{i\in \cP} l_i^* = -\Delta$.
    \end{itemize}

     It can be observed the new constructed $\{I_i^*\}_{i\in [n]}$ and $\lambda^*$ satisfy the condition stated in~\cref{lem:equilibrium_property} without changing the objective. Therefore, using~\cref{lem:equilibrium_property}, we can complete this proof.
\end{proof}

From the analysis in the proof above, we further characterize the optimal distribution $\vx^*$. The proof is deferred to~\cref{sec:lem:optallo}.

\begin{lemma}[Optimal Resource Distribution]\label{lem:optallo}
    Given an agent set $\vA =  \{(v_i,B_i,\Gamma_i)\}_{i\in [n]}$ with $v_1\geq v_2 \geq ... \geq v_n$, the following resource distribution achieves the maximum market liquid welfare:
        $$ x_i^*=\begin{cases}
        B_i/v_i & i\leq k^*\\
        \sum_{i=k^*+2}^n \Gamma_i - \sum_{i=1}^{k^*} B_i/v_i & i=k^*+1\\
        -\Gamma_i & i\geq k^*+2
      \end{cases},$$
      where $k^*=\max\{l\in[n]\mid \sum_{i=1}^l B_i/v_i \le \sum_{i=l+1}^n \Gamma_i\}$.
\end{lemma}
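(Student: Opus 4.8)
The objective splits as $\mlw(\vx)=\sum_{i\in[n]}v_i\Gamma_i+\sum_{i\in[n]}\min\{v_ix_i,B_i\}$, and the first sum does not depend on $\vx$, so maximizing $\mlw$ over the reachable states is equivalent to maximizing the separable concave function $f(\vx):=\sum_{i\in[n]}g_i(x_i)$, where $g_i(t):=\min\{v_it,B_i\}$, over the polyhedron $P:=\{\vx:\sum_{i\in[n]}x_i=0,\ x_i\ge-\Gamma_i\ \text{for all }i\}$. The plan is to isolate a self-contained optimality certificate for this concave program, verify that the distribution in the statement meets it, and obtain feasibility directly from the definition of $k^*$.

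The certificate I would establish first reads: a point $\vx\in P$ maximizes $f$ over $P$ provided that for every agent $j$ with $x_j>-\Gamma_j$ and every agent $i\neq j$ one has $g_i^{+}(x_i)\le g_j^{-}(x_j)$, where $g^{+},g^{-}$ denote right/left derivatives; informally, no infinitesimal transfer of resource from a not-yet-emptied seller $j$ to any agent $i$ can raise $f$. To prove it, fix an arbitrary competitor $\mathbf y\in P$, set $\mathbf d=\mathbf y-\vx$, and split $[n]$ into $S^{+}=\{i:d_i>0\}$ and $S^{-}=\{j:d_j<0\}$; each $j\in S^{-}$ satisfies $x_j>y_j\ge-\Gamma_j$, so the hypothesis applies to it. Since $\sum_{i\in S^{+}}d_i=\sum_{j\in S^{-}}(-d_j)$, the vector $\mathbf d$ admits a standard transportation decomposition $\{c_{ij}\ge0\}_{i\in S^{+},\,j\in S^{-}}$ with $\sum_{j}c_{ij}=d_i$ and $\sum_{i}c_{ij}=-d_j$. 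The concave tangent inequality, applied coordinatewise with the one-sided subgradient $g_i^{+}(x_i)$ for $i\in S^{+}$ and $g_j^{-}(x_j)$ for $j\in S^{-}$, gives $g_i(y_i)\le g_i(x_i)+g_i^{+}(x_i)\,d_i$ and $g_j(y_j)\le g_j(x_j)+g_j^{-}(x_j)\,d_j$; summing over all agents (coordinates with $d_i=0$ contribute nothing) and regrouping through the plan yields
\[ f(\mathbf y)-f(\vx)\ \le\ \sum_{i\in S^{+}}\sum_{j\in S^{-}}c_{ij}\,\bigl(g_i^{+}(x_i)-g_j^{-}(x_j)\bigr)\ \le\ 0, \]
so $\vx$ is optimal. (An equivalent but heavier route writes the KKT conditions of the concave program and reads the multiplier of $\sum_ix_i=0$ as a uniform market price $\lambda^*=v_{k^*+1}$; I would prefer the exchange argument above.)

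It then remains to check that the proposed $\vx^*$ lies in $P$ and passes the certificate. Feasibility: $\sum_ix_i^*=0$ holds by the definition of $x^*_{k^*+1}$; the entries $x_i^*=B_i/v_i\ge0\ge-\Gamma_i$ for $i\le k^*$ and $x_i^*=-\Gamma_i$ for $i\ge k^*+2$ are trivially in range; and $x^*_{k^*+1}\ge-\Gamma_{k^*+1}$ is exactly $\sum_{i\le k^*}B_i/v_i\le\sum_{i\ge k^*+1}\Gamma_i$, the defining property of $k^*$ at $l=k^*$, while its maximality (failure at $l=k^*+1$) gives the companion bound $x^*_{k^*+1}<B_{k^*+1}/v_{k^*+1}$. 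Now $g_i^{+}(t)$ equals $v_i$ for $t<B_i/v_i$ and $0$ otherwise, and $g_i^{-}(t)$ equals $v_i$ for $t\le B_i/v_i$ and $0$ otherwise. The not-yet-emptied sellers in $\vx^*$ are precisely the agents $j\le k^*$ together with $k^*+1$ when $x^*_{k^*+1}>-\Gamma_{k^*+1}$, and each such $j$ has $x_j^*\le B_j/v_j$, hence $g_j^{-}(x_j^*)=v_j$. Meanwhile $g_i^{+}(x_i^*)=0$ for every $i\le k^*$, and $g_i^{+}(x_i^*)\le v_i$ always; for any relevant pair with $i\ge k^*+1$ we have $i\ge k^*+1\ge j$ (with $i=j$ excluded), so $v_i\le v_j$ by $v_1\ge\cdots\ge v_n$. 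In all cases $g_i^{+}(x_i^*)\le v_j=g_j^{-}(x_j^*)$, so the certificate holds and $\vx^*$ is optimal. The degenerate situations $k^*=0$, $k^*=n$ (which forces all $B_i=0$ and $x_i^*=0$), and agents with $B_i=0$ or $\Gamma_i=0$ are absorbed by exactly these inequalities.

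I expect the optimality certificate to be the real obstacle: passing from ``no improving pairwise transfer'' to global optimality requires the transportation decomposition of $\mathbf y-\vx^*$ together with the correct choice of one-sided subgradients in the tangent inequality. Once that lemma is in hand, everything after it is bookkeeping dictated by the definition of $k^*$ and the sorted valuations.
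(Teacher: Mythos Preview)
Your proof is correct but takes a different route from the paper's. You cast the problem as separable concave maximization over $P$, isolate a general subgradient optimality certificate (no profitable transfer from any agent $j$ with $x_j>-\Gamma_j$ to any other agent), prove it via a transportation decomposition of $\mathbf y-\vx$ combined with one-sided tangent inequalities, and then verify the certificate at $\vx^*$. The paper instead argues directly: it first reduces without loss of generality to competitors $\vx'$ with $x_i'\le B_i/v_i$ for all $i$ (excess beyond the cap adds nothing), substitutes $x_{k^*+1}'=-\sum_{i\ne k^*+1}x_i'$, bounds each $\min\{v_ix_i',B_i\}$ above by $v_ix_i'$, and regroups around the pivot value $v_{k^*+1}$ to obtain
\[
\mlw(\vx')-\mlw(\vx^*)\ \le\ \sum_{i\le k^*}(v_i-v_{k^*+1})(x_i'-x_i^*)\;-\;\sum_{i\ge k^*+2}(v_{k^*+1}-v_i)(x_i'-x_i^*)\ \le\ 0.
\]
Your approach is more modular---the certificate is reusable and makes the role of $v_{k^*+1}$ as the Lagrange multiplier for $\sum_ix_i=0$ explicit---while the paper's argument is shorter and entirely elementary, essentially hard-wiring that multiplier into a single chain of inequalities. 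Two cosmetic remarks: the optimum in this lemma is taken over all self-consistent feasible $\vx$ (your $P$), not over ``reachable states'' in the sense of the model; and the agents you call ``not-yet-emptied sellers'' in $\vx^*$ are actually the buyers together with the pivot agent $k^*+1$, though this does not affect the mathematics.
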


\section{Approximating MOP Truthfully on Large Markets}\label{sec:mp}

\cref{lem:mop} proves the existence of a uniform-price mechanism that achieves the optimal market liquid welfare. However, it is evident that the constructed mechanism is not truthful, as buyers can increase their resource purchase limit by misreporting a lower valuation. Moreover, even if we disregard the issue raised by exchange intervals, determining the uniform price truthfully based on the reported valuations remains a challenging task. Fortunately, we have discovered that these issues can be successfully resolved under the mild large market assumption, with only a constant factor loss in approximation. Formally, we aim to show the following.

\begin{definition}[Large Market Assumption]\label{def:large_market}
Given a set of agent $\vA=\{v_i,B_i,\Gamma_i\}_{i\in [n]}$, define $\opt$ as the optimal market liquid welfare and $\vx^*$ as the resource distribution constructed by~\cref{lem:optallo}. Further, define the buyer set $\cB^* :=\{i \in [n] \mid x_i^* \ge 0\} $ and the seller set $\cS^* := \{i \in [n] \mid x_i^* < 0\}$.The market is considered 
\emph{large} with respect to a large constant $1/\theta$ if it satisfies the following conditions:
\begin{enumerate}[left=3em]
    \item[(1)] Limited contribution of each agent: $\forall i\in [n]$, $v_i \cdot \Gamma_i + \min\{v_i\cdot x_i^*,B_i\} \leq \theta \cdot \opt $.
    
    \item[(2)] No monopolistic buyer: $\forall i\in \cB^*$, $x_i^* \leq \theta \cdot(\sum_{j\in \cB^*}x_j^*) $.
    
    \item[(3)] No monopolistic seller: $\forall i\in \cS^*$, $ |x_i^*| \leq \theta \cdot (\sum_{j\in \cS^*}|x_j^*|) $.
\end{enumerate}

\end{definition}

\begin{theorem}\label{thm:large_market}
   Under the large market assumption (parameterized by $\theta$), there exists a randomized uniform price mechanism that is universally truthful and profitable, and admits constant approximation with high probability. Further, when parameter $\theta \rightarrow 0 $, the approximation ratio approaches $1/2$.
\end{theorem}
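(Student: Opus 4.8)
The plan is to build a randomized uniform-price mechanism by combining a ``sampling'' step to learn the market optimal price with a conservative allocation that is robust to strategic misreporting. First I would use \cref{lem:optallo} as an offline benchmark: given truthful valuations, the optimal distribution is determined by the threshold index $k^*$, and by \cref{lem:mop} the corresponding MOP $\lambda^* = v_{k^*}$ (roughly, the smallest buyer valuation) together with the intervals $I_i^* = [0, B_i/v_i]$ for buyers and $I_i^* = [-\Gamma_i,0]$ for sellers achieves $\opt$. The mechanism I propose flips a fair coin to split the agents (or, more precisely, randomly designates a ``statistics'' role), computes a candidate price $\hat\lambda$ from one part using the structure of \cref{lem:optallo}, and then applies $\hat\lambda$ with capped intervals $I_i = [0, B_i/\hat\lambda]$ for agents who are buyers at $\hat\lambda$ (i.e.\ $b_i \ge \hat\lambda$) and $I_i = [-\Gamma_i, 0]$ for the rest, possibly deactivating the agents used to set the price so they have no incentive to manipulate it. Because the price an agent faces does not depend on her own report (only on which side of $\hat\lambda$ she lands, and the buyer cap $B_i/\hat\lambda$ is independent of $b_i$), truthfulness reduces to showing an agent cannot profit by switching sides; since buyers have $v_i \ge \hat\lambda$ and thus nonnegative utility from trading up to the cap, while misreporting to become a seller only forfeits that utility (and resource-preferability breaks ties toward trading), no agent gains — this yields universal truthfulness. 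Profitability is immediate from uniformity of the price, as in the discussion preceding \cref{lem:mop}: every transaction is internally balanced, so $\sum_i p_i = \hat\lambda \sum_i x_i = 0 \ge 0$.

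For the approximation guarantee I would condition on the ``good event'' that the random split is balanced in the sense of the three large-market conditions: since each agent's contribution, each buyer's volume, and each seller's volume is at most a $\theta$-fraction of the relevant total, a Chernoff/Hoeffding bound over the random assignment shows that with high probability the sampled part and the complementary part each capture roughly half of $\opt$ and half of the total buy/sell volume, with multiplicative error $O(\sqrt{\theta})$ or additive error $O(\theta \cdot \opt)$. On this event, the price $\hat\lambda$ computed from the sample is close to the true threshold valuation, so the set of active buyers and their caps at $\hat\lambda$ are essentially the optimal ones; then I invoke \cref{lem:equilibrium_property} — after the usual $\Delta$-balancing trick from the proof of \cref{lem:mop} to equalize total demand and supply — to conclude the induced equilibrium is unique and its market liquid welfare equals $\sum_{i \in \cB} (v_i\Gamma_i + \min\{v_i r_i, B_i\}) + \sum_{j\in\cS} v_j\max\{0,\Gamma_j - l_j\}$, which is at least $\tfrac12 \opt$ minus the sampling losses minus the contribution of the deactivated price-setting agents (bounded by condition~(1)). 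Collecting the error terms gives an approximation ratio of $\tfrac12 - O(\sqrt{\theta})$ (or $\tfrac12 - O(\theta)$), which tends to $1/2$ as $\theta \to 0$.

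The main obstacle I anticipate is the interaction between the ``bi-level'' strategic structure and the equilibrium-selection (worst-case over reachable states) in \cref{def:market_utility}: I must ensure that when an agent deviates, the resulting exchange constraints still satisfy (or can be analyzed near) the equilibrium-unique condition of \cref{lem:equilibrium_property}, because otherwise the adversarial choice of reachable state could both hurt the deviator's utility bound and wreck the welfare estimate. Handling this cleanly likely requires that the mechanism always outputs constraints meeting the balance condition $\sum_{i\in\cB}\min\{r_i,B_i/\lambda_i\} = \sum_{j\in\cS}\min\{l_j,\Gamma_j\}$ exactly (via the $\Delta$-reset), so that \emph{every} reachable state is the same state; then the worst-case minima in \cref{def:market_utility}, \cref{def:profit}, and \cref{def:MLW} are all trivialized, and the analysis above goes through. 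A secondary technical point is controlling how misreporting changes $\hat\lambda$ itself — this is why the agents whose reports feed into the price computation must be excluded from trading (a standard random-sampling-auction device), and why I only need the large-market conditions to argue their excluded mass is negligible.
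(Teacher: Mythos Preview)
Your high-level architecture (random sample to set the price, deactivate the sampled agents, uniform price for the rest) matches the paper, but the two concrete design choices you make both diverge from the paper's mechanism and each introduces a real gap.

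\textbf{Gap 1: the price you sample is the wrong statistic.} You propose to set $\hat\lambda$ to (an estimate of) the threshold valuation $v_{k^*}$ from \cref{lem:optallo}. But $v_{k^*}$ is an order statistic of the reports, not a sum; sampling a $\beta$-fraction of agents does not give you concentration for it, no matter how small $\theta$ is. The paper never estimates $\lambda^*$. Instead it first proves a separate lemma (\cref{lem:map}) showing that the single price $\bar\lambda = \opt/(2\vgam)$, applied with \emph{no} interval restrictions at all, already guarantees $\mlw \ge \opt/2$ in \emph{every} reachable state. The mechanism then estimates this sum-type quantity via $\lambda = \tfrac{1-\beta}{\beta}\cdot \opt(L)/(2\vgam_R)$, which does concentrate by Chernoff under the large-market conditions.

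\textbf{Gap 2: your $\Delta$-balancing breaks truthfulness.} You want to force the equilibrium-unique condition of \cref{lem:equilibrium_property} by resetting intervals so that total capped demand equals total capped supply. But that reset necessarily depends on the reported valuations of the active (non-sampled) agents: which side of $\hat\lambda$ each agent lands on, and in the $\Delta$-step of \cref{lem:mop} which agents get their $r_i$ or $l_i$ shrunk, are functions of the bid profile. An active agent can therefore affect not just her own side-assignment but other agents' intervals, and your one-line ``switching sides can only hurt'' argument no longer covers all deviations. The paper avoids this entirely by setting $I_i = [-\infty,\infty]$ for every active agent; then the constraints are completely report-independent, and the worst-case over reachable states is handled by the two-case analysis inside \cref{lem:map} rather than by forcing uniqueness.

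In short, the key idea you are missing is \cref{lem:map}: once you know that an unrestricted uniform price of $\opt/(2\vgam)$ already gives a $1/2$-approximation in every equilibrium, the mechanism reduces to estimating that single scalar from a sample, and both of your anticipated obstacles (equilibrium multiplicity, interval-dependent truthfulness) disappear.
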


\subsection{Market $1/2$-Approximate Price}

We address the aforementioned issues of truthfulness one by one. This subsection focuses on addressing the untruthfulness that may arise from exchange intervals. We show that even without any constraint on the resource units bought or sold by each agent, a 2-approximation uniform price mechanism exists.

\begin{lemma}[Market Approximate Price]\label{lem:map}
    Given any agent set $\vA$, let $\opt$ be the optimal market liquid welfare.   
    There always exists a price $ \blam $ such that the obtained market liquid welfare is at least $\opt/2$ when each agent subjects to constraint $ ([-\infty,\infty], \blam) $.
\end{lemma}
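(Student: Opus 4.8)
The plan is to use the characterization of the optimal distribution $\vx^*$ from \cref{lem:optallo} and show that a single well-chosen uniform price, with unbounded exchange intervals, still forces any reachable equilibrium to collect at least half of $\opt$. Recall $v_1 \ge v_2 \ge \dots \ge v_n$, the threshold index $k^*$, and that under $\vx^*$ agents $1,\dots,k^*$ buy $B_i/v_i$ units, agent $k^*+1$ is a ``pivot'' absorbing the slack, and agents $k^*+2,\dots,n$ sell all of $\Gamma_i$. The natural candidate price is $\blam = v_{k^*+1}$ (the pivot's valuation), so that agents $1,\dots,k^*$ are strict buyers, agents $k^*+2,\dots,n$ are strict sellers, and the pivot is indifferent. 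With intervals $[-\infty,\infty]$, every strict buyer $i$ will push to $x_i = B_i/\blam$ and every strict seller $j$ will push to $x_j = -\Gamma_j$, so the only freedom left in a reachable state is how the indifferent pivot (and any ties at value $v_{k^*+1}$) behaves and whether the market clears.

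First I would argue that total buyer demand at price $\blam$, namely $D = \sum_{i \le k^*} B_i/\blam$ (plus possibly the pivot), compares to total seller supply $S = \sum_{j \ge k^*+2} \Gamma_j$ (plus possibly the pivot), and that by the definition of $k^*$ we have $\sum_{i\le k^*} B_i/v_i \le S$ while $\sum_{i \le k^*+1} B_i/v_{k^*+1} \ge S$ (or the boundary cases where $k^* = n$ or there are no sellers, which I would handle separately and which only make the bound easier). Hence by letting the pivot agent buy exactly the residual amount $S - \sum_{i\le k^*}B_i/\blam$ if $S \ge \sum_{i \le k^*}B_i/\blam$ — or by noting the market is demand-constrained otherwise — one can make demand equal supply, so \cref{lem:equilibrium_property} applies and the equilibrium is essentially pinned down. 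The key quantitative step is then to lower-bound $\mlw$ of any reachable state: every agent contributes its base term $\sum_i v_i \Gamma_i$ for free, and on top of that each buyer $i \le k^*$ contributes $\min\{v_i x_i, B_i\}$. At price $\blam = v_{k^*+1} \le v_i$, buyer $i$ buys $B_i/\blam \ge B_i/v_i$ units, so its capped contribution $\min\{v_i \cdot B_i/\blam,\, B_i\} = B_i$ — it hits its budget cap exactly. Comparing to $\vx^*$, where buyer $i \le k^*$ contributes exactly $B_i$ as well, the top-$k^*$ buyers recover their full optimal contribution; the only loss is the pivot's term and the lost value $\sum_{j \ge k^*+2} v_j \Gamma_j - \sum_{j\ge k^*+2}v_j\cdot 0$ — wait, sellers keep their base $v_j\Gamma_j$ in the $\mlw$ formula regardless, so actually the only genuine loss relative to $\opt$ is the pivot agent $k^*+1$'s contribution $\min\{v_{k^*+1} x^*_{k^*+1}, B_{k^*+1}\}$.

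So the crux is bounding that single pivot term. I would show $\opt = \sum_i v_i\Gamma_i + \sum_{i\le k^*}B_i + \min\{v_{k^*+1}x^*_{k^*+1}, B_{k^*+1}\}$ and that the price $\blam$ recovers $\sum_i v_i\Gamma_i + \sum_{i \le k^*}B_i$, which is at least $\opt/2$ precisely when the pivot's contribution does not exceed $\sum_i v_i\Gamma_i + \sum_{i\le k^*}B_i$. If the pivot term is larger than that, I would instead switch strategies: choose the price to be $v_{k^*+1}$ but have the pivot be treated as the sole relevant buyer, or more cleanly, observe that in that regime a two-agent style argument (the pivot alone versus everyone below) gives $\opt/2$; this is the standard ``either the top part or the pivot carries half the welfare'' dichotomy familiar from budget-feasible auction analysis. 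The main obstacle I anticipate is exactly this case analysis around the pivot agent together with verifying that, with unbounded intervals, no ``bad'' equilibrium exists in which strict buyers fail to exhaust their budgets or strict sellers fail to exhaust their resources — i.e., that \cref{lem:equilibrium_property}'s balance condition can always be arranged (or its failure only helps), so that the worst-case reachable $\mlw$ is still controlled. Handling the boundary cases ($k^* = n$, no sellers, or ties at value $\blam$ among would-be sellers) cleanly is the fiddly part, but none of it should be conceptually deep.
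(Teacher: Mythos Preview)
Your choice of price $\blam = v_{k^*+1}$ does not work, and the flaw is in the claim ``every strict buyer $i$ will push to $x_i = B_i/\blam$ \ldots\ so the only freedom left in a reachable state is how the indifferent pivot behaves.'' The reachability definition (\cref{def:reach}) says a state is an equilibrium when \emph{either} every buyer has $p_i=B_i$ \emph{or} every seller has $x_j=-\Gamma_j$; it does not guarantee both. At $\blam=v_{k^*+1}$ the pivot is itself a buyer (since $v_{k^*+1}\ge\blam$), and by maximality of $k^*$ one has $\sum_{i\le k^*+1}B_i/\blam\ge\sum_{i\le k^*+1}B_i/v_i>\sum_{j\ge k^*+2}\Gamma_j$, so aggregate demand strictly exceeds supply. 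The reachable states are therefore the seller-exhausted ones, and the allocation of the scarce supply among \emph{all} buyers---not just the pivot---is unconstrained. In the worst reachable state the pivot absorbs everything and the high-value buyers get nothing. Concretely: three agents with $(v_1,B_1,\Gamma_1)=(M,M,0)$, $(v_2,B_2,\Gamma_2)=(1,M,0)$, $(v_3,B_3,\Gamma_3)=(0,0,1)$ give $k^*=1$, $\opt=M$, your price $\blam=1$, and the reachable state $(x_1,x_2,x_3)=(0,1,-1)$ has $\mlw=1$. The ratio is $1/M$, not $1/2$, and your pivot-dichotomy does not rescue this, because the issue is not the pivot's contribution to $\opt$ but its ability to \emph{steal} supply in the worst equilibrium.

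The paper sidesteps this by choosing the price from the value side rather than from any agent's valuation: it sets $\blam=\opt/(2\vgam)$ with $\vgam=\sum_i\Gamma_i$. This makes $\blam$ a per-unit floor on the liquid contribution of whoever ends up holding resources: if sellers are exhausted, every one of the $\vgam$ units sits with some buyer $i$ having $v_i\ge\blam$ and $\blam x_i\le B_i$, so each unit contributes at least $\blam$ to $\mlw$, yielding $\mlw\ge\blam\cdot\vgam=\opt/2$ for \emph{every} reachable state, irrespective of which buyers bought. The buyer-exhausted case is then handled by the complementary bound $\opt\le\sum_{i\in\cB}(v_i\Gamma_i+B_i)+\blam\vgam$.
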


\begin{proof}
    Let $\vgam = \sum_{i\in [n]} \Gamma_i$ be the sum units of resources in the market. We set the price $\blam = \frac{\opt}{2\cdot \vgam}$. Recall that the buyers $\cB$ are the agents with $v_i\geq \blam$ and the sellers $\cS$ are the agents with $v_i<\blam$.
    The analysis distinguishes two cases based on the equilibrium state $(\vx,\vp)$ that the market eventually reaches:
    \begin{enumerate}[left=3em]
        \item[(1)] No seller has remaining resources to sell.
        \vspace{0.5em}
        \item[(2)] No buyer is willing to buy.
    \end{enumerate}

    The proof of the first case is straightforward. As $x_j = -\Gamma_j$  for any seller $j\in \cS$ and any buyer $i\in \cB$ has $x_i = \frac{p_i}{\blam} \leq \frac{B_i}{\blam}$, we have
    \begin{align*}
        \mlw(\vx) = \sum_{i\in \cB} v_i \cdot \Gamma_i + \min\{v_i\cdot x_i, B_i\} 
        \geq \sum_{i\in \cB} \blam \cdot \Gamma_i + \min\{\blam \cdot x_i, B_i\} 
         = \vgam \cdot \blam \geq \frac{\opt}{2}.
    \end{align*}   

    For the second case, each buyer $i\in \cB$ exhausts the budget, i.e., $\forall i\in \cB$, $x_i = B_i/ \blam$, and achieves the maximum possible contribution to the objective, i.e., $B_i + v_i\cdot \Gamma_i$. As any seller has a value $v_i<\blam$, the maximum possible of their contributions is at most $\vgam \cdot \blam$. Then we get an upper bound of $\opt$: $\opt \leq \sum_{i\in \cB} v_i\cdot \Gamma_i + B_i + \vgam \cdot \blam$. Therefore, the proof can be completed:
    \begin{align*}
        \mlw(\vx) &\geq \sum_{i\in \cB} v_i\cdot \Gamma_i + B_i \geq \opt - \vgam \cdot \blam = \frac{\opt}{2}.
    \end{align*} 
\end{proof}

We further give a lower bound instance to demonstrate that $1/2$-approximation is the best possible for such mechanisms. Due to space limitations, the proof is deferred to~\cref{sec:lem:map_lower_bound}.

\begin{lemma}\label{lem:map_lower_bound}
    There exists an instance such that for any price $\lambda$ and any parameter $\epsilon>0$, the market liquid welfare is at most $(\frac{1}{2}+\epsilon)\opt$ when each agent subjects to constraint $([-\infty,\infty] , \lambda)$.
\end{lemma}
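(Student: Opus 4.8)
The plan is to construct a small, explicit exchange market in which no uniform price can capture more than roughly half of the optimal market liquid welfare. The optimal distribution given by~\cref{lem:optallo} routes resources to high-value agents up to the point where their budgets bind, so the natural instance to exploit is one where a single buyer has a very high valuation but a tiny budget, while the resources are initially held by a low-value seller. Concretely, I would take two agents: a buyer with value $v_1 = V$ for a large $V$, budget $B_1 = 1$, and $\Gamma_1 = 0$; and a seller with value $v_2 = 1$, arbitrary budget, and $\Gamma_2 = \Gamma$ for a large $\Gamma$. The optimum transfers $1/V$ units to the buyer (the budget-capped amount), yielding $\opt = v_2 \Gamma + \min\{V \cdot (1/V), B_1\} = \Gamma + 1$. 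Possibly I would add a third ``dummy'' agent or tune parameters so that the construction in~\cref{lem:optallo} genuinely produces this split, but the two-agent core already exhibits the gap.

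Next I would analyze what a uniform price $\lambda$ with unconstrained intervals $([-\infty,\infty],\lambda)$ can achieve, splitting on the position of $\lambda$ relative to the two valuations. If $\lambda > V$, then both agents are sellers, nothing can happen that helps, and the welfare is just $\sum_i v_i \Gamma_i = \Gamma$, which is a $\Gamma/(\Gamma+1) \to 1$ fraction — so this case is actually \emph{good}, and I need to be careful: the lower bound must come from a case where the adversarial reachable state is bad. The real tension is that with unconstrained intervals and $\lambda \le 1$, the buyer (value $V \ge \lambda$) wants to buy as much as her budget allows, namely $B_1/\lambda = 1/\lambda$ units, but the seller (value $1$, and $1 \ge \lambda$ iff $\lambda \le 1$) is \emph{also} a buyer when $\lambda < 1$, or indifferent when $\lambda = 1$; when $\lambda > 1$ the seller sells all $\Gamma$ units. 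So the informative regime is $1 < \lambda \le V$: agent $2$ is a seller willing to dump all $\Gamma$ units, agent~$1$ is a buyer who can absorb only $1/\lambda \le 1$ of them. By~\cref{def:reach}, since there is leftover seller supply, the equilibrium forces $p_1 = B_1 = 1$, so $x_1 = 1/\lambda$ and $x_2 = -1/\lambda$ (the market clears with the buyer's demand as the binding side — I would double check that a reachable state with $x_2 = -1/\lambda$ and $x_1 = 1/\lambda$ satisfies all three conditions of~\cref{def:reach}, which it does since no buyer wants more and not all of the seller's resources sold is permitted because condition (3) only requires \emph{either} branch). Then $\mlw(\vx) = v_2(\Gamma - 1/\lambda) + \min\{V/\lambda, 1\} = \Gamma - 1/\lambda + 1 \le \Gamma + 1 = \opt$ — which is again essentially optimal, not bad.

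This tells me the two-agent instance is the wrong one, and the main obstacle is genuinely \emph{designing} the instance so that every uniform price is forced into a bad equilibrium simultaneously. The fix, mirroring the standard budget-feasible lower bound, is to use \emph{many} buyers at geometrically spaced valuations: buyers $i = 1,\dots,m$ with $v_i = 2^{i}$, budgets $B_i$ chosen so that $B_i / v_i$ (the optimal purchase) contributes equally, say $B_i/v_i = 1$ for all $i$, together with one seller holding $\Gamma = m$ units at value $v_0$ slightly below $2^1$. Then $\opt$ collects $v_0 \Gamma$ plus $\sum_i B_i = \sum_i 2^i \approx 2^{m+1}$, dominated by the top buyer's budget term. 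Now fix any uniform price $\lambda$: only buyers with $v_i \ge \lambda$ participate, and with unconstrained intervals the adversarial reachable state (if seller supply is the scarce side) lets the cheapest participating buyers grab the $m$ available units first at $x_i = B_i/\lambda$, so high-value buyers may get \emph{nothing}; conversely if $\lambda$ is large, few buyers qualify and the seller's $v_0 \Gamma$ term is small compared to the foregone budgets of excluded high buyers. Balancing these, the best $\lambda$ recovers a $(1/2 + o(1))$ fraction. I would (i) write down the instance with explicit parameters and a free scaling parameter $N$, (ii) compute $\opt$ via~\cref{lem:optallo}, (iii) for an arbitrary $\lambda$ identify the worst reachable state using~\cref{def:reach} (the key point: with $[-\infty,\infty]$ intervals, the allocation among buyers is \emph{not} pinned down, so the adversary assigns units to minimize $\mlw$), (iv) optimize over $\lambda$ and show the ratio is at most $1/2 + \epsilon$ by taking $N$ large and $m$ large. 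The delicate step (iv), and more importantly step (iii) — correctly characterizing which reachable state the worst-case $\mlw$ picks under unbounded intervals — is where the real work lies; everything else is bookkeeping with geometric sums.
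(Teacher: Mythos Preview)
Your two-agent warm-up correctly shows that a single high-value buyer against a seller is not enough, and you do eventually isolate the right adversarial lever: with intervals $[-\infty,\infty]$, the worst reachable state routes supply to the lowest-value active buyer. But the geometric instance you then build does \emph{not} exhibit a $1/2$ gap. Take your $m$ buyers with $v_i=B_i=2^{i}$ and a seller with $\Gamma=m$. For $j\approx\log_2 m$, set $\lambda=2^{m-j}$; then the $j+1$ active buyers have total demand $\sum_{i=m-j}^{m} 2^{i}/\lambda = 2^{j+1}-1\le m$, so by~\cref{def:reach} the buyer side binds and the equilibrium is unique: every active buyer exhausts her budget, contributing $B_i$ in full. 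This yields $\mlw \ge \sum_{i=m-j}^{m}2^{i}=2^{m+1}-2^{m-j}$, which is a $1-O(1/m)$ fraction of $\opt=2^{m+1}-2$. The geometric budgets are precisely what kills the construction: because top buyers carry almost all of $\opt$, a price that admits only them loses almost nothing.

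The paper's proof uses a three-agent instance that targets exactly this failure mode. Take two buyers with \emph{equal} budgets $B_1=B_2=1$ but $v_1=\frac{1/2+\epsilon}{2\epsilon}$ huge and $v_2=1$, and one seller with $v_3=0$, $\Gamma_3=1$. Then $\opt=1+\frac{1/2-\epsilon}{1/2+\epsilon}=\frac{1}{1/2+\epsilon}$: agent~1 consumes a negligible amount of resource yet contributes her full budget $1$, and agent~2 takes the rest contributing nearly $1$. Now any $\lambda>1$ excludes agent~2, and agent~1 alone contributes $\min\{v_1/\lambda,B_1\}=1$; any $\lambda\le 1$ admits both buyers, the seller's single unit is scarce, and the adversarial reachable state hands the entire unit to agent~2, again giving $\mlw=1$. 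So the worst-case $\mlw$ is $1$ for every $\lambda$, i.e.\ at most $(1/2+\epsilon)\opt$. The essential feature your instance lacks is two buyers whose contributions to $\opt$ are \emph{comparable} while their values differ enormously; that is what forces every price to either exclude a non-negligible share of $\opt$ or open the door to adversarial misallocation.
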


\subsection{Truthful Uniform Price Mechanism}

The last subsection eliminates the interval untruthfulness issue. Now we employ random sampling to set a uniform price truthfully. 

\begin{algorithm}[htb]
  \caption{Truthful \MECHL}\label{alg:uniform_large}
  \begin{algorithmic}[1]
  \Require Agent set $\vA=\{(v_i, B_i, \Gamma_i)\}_{i\in [n]}$ and parameter $\beta\in (0,1/2)$.
  \Ensure Exchange constraints $(\vI,\vlam)$  
  \State Independently sample each agent with a probability of $\beta$. Denote the agents that are sampled as $L$ and the remaining agents as $R$.
  \State Compute the optimal market liquid welfare $\opt(L)$ of the agent set $L$.
  \State Set a uniform price $\lambda \gets \frac{1-\beta}{\beta } \cdot \frac{\opt(L)}{2 \cdot \vgam_{R}}$, where $\vgam_R = \sum_{i\in R}\Gamma_i$ is the total initial resource units of agent set $R$.
  \State Set $I_i\gets [0,0]$ $\forall i\in L$ and $I_i \gets [-\infty,\infty]$ $\forall i\in R$. 
  \State \Return $\{(I_i, \lambda)\}_{i\in [n]}$
  \end{algorithmic}
\end{algorithm}

We describe the mechanism in~\cref{alg:uniform_large}. The main intuition is that when the market is large enough, we can estimate the price stated in~\cref{lem:map} accurately by sampling a small fraction of agents. We show the following two lemmas:

\begin{lemma}\label{lem:uniform_large_truthful}
    \cref{alg:uniform_large} is universally  truthful and profitable.
\end{lemma}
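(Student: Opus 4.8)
The plan is to establish the two claims — universal truthfulness and profitability — separately, since they rely on different features of~\cref{alg:uniform_large}.

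\textbf{Profitability.} This is the easy half. The mechanism sets a single uniform price $\lambda$ for \emph{all} agents (only the exchange intervals differ between $L$ and $R$). Hence in any reachable state $(\vx,\vp)$ we have $p_i = \lambda \cdot x_i$, so $\sum_i p_i = \lambda \sum_i x_i = 0$ by the self-consistency condition~(1) of~\cref{def:reach}. So $\sum_i p_i = 0 \ge 0$ and the mechanism is profitable (in fact exactly break-even); no transaction is ever subsidized. I would state this in one or two sentences.

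\textbf{Universal truthfulness.} Here I would fix the coin flips of the random sampling and argue truthfulness of the resulting deterministic mechanism; universality then follows because the randomness (the partition into $L$ and $R$) is independent of the reported valuations. I would split into two cases according to which side of the partition the deviating agent $i$ falls on.
\begin{itemize}[left=1em]
    \item If $i \in R$: the price $\lambda$ is computed entirely from $\opt(L)$ and $\vgam_R$, neither of which depends on $b_i$ (the budgets $B_j$ and endowments $\Gamma_j$ are public, and $\opt(L)$ uses only the reports of agents in $L$). So agent $i$ cannot move the price at all, and her interval $[-\infty,\infty]$ is fixed. In stage~2 she then faces a fixed price $\lambda$ with no volume constraint; with true value $v_i$, her utility-maximizing net trade is to buy up to her budget if $v_i \ge \lambda$ and to sell all $\Gamma_i$ units if $v_i < \lambda$, and reporting $b_i \ne v_i$ cannot change the rules she faces, hence cannot help. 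One needs to check this against the worst-case (minimum-utility) reachable state of~\cref{def:market_utility}: since the price is what it is regardless of $b_i$, the set of reachable outcomes relevant to $i$'s own utility is essentially the same, and resource-preference breaks ties in the agent's favor consistently.
    \item If $i \in L$: agent $i$'s interval is $[0,0]$, so her net trade is forced to $x_i = 0$ regardless of the price, giving her utility $0$. Misreporting $b_i$ only changes $\opt(L)$ and hence the price $\lambda$ faced by agents in $R$ — it cannot change the fact that $x_i = 0$, so her utility stays $0$. Thus she is (weakly) indifferent and truthful reporting is (weakly) optimal.
\end{itemize}

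\textbf{Expected main obstacle.} The profitability argument and the $i\in L$ case are immediate. The delicate part is the $i \in R$ case: one must argue carefully that, because $\lambda$ and $I_i$ do not depend on $b_i$, the worst-case market utility $U_i$ of~\cref{def:market_utility} is maximized by truthful reporting. The subtlety is that a misreport could in principle still matter if it changed the \emph{set of reachable states} in a way that affects agent $i$'s own $u_i$ — but since $u_i$ depends only on $(x_i, p_i) = (x_i, \lambda x_i)$ and a resource-preferable agent at a fixed price always drives $x_i$ to the same optimal value (budget-exhausting purchase, or full liquidation of endowment, or $0$ when $v_i = \lambda$), the reachable value of $u_i$ is pinned down by $(\lambda, v_i, B_i, \Gamma_i)$ alone. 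I would make this precise by invoking~\cref{lem:equilibrium_property} / the reachability conditions to show that every reachable state assigns agent $i \in R$ the same utility $\max\{0,\ (v_i-\lambda)\cdot\min\{\Gamma_i + B_i/\lambda,\ \ldots\}\}$-type quantity, so the $\min$ in~\cref{def:market_utility} is vacuous for her, and truthful reporting is optimal because it is optimal pointwise.
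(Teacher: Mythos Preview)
Your core argument is correct and matches the paper's (very brief) proof exactly: agents in $L$ cannot trade so their report is irrelevant to their own utility; for agents in $R$ the price $\lambda$ (and indeed the \emph{entire} output $\{(I_j,\lambda)\}_{j\in[n]}$) is computed from $\opt(L)$ and $\vgam_R$ only, hence is independent of $b_i$; and uniform pricing gives $\sum_i p_i = \lambda\sum_i x_i = 0$, so profitability is automatic.

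Where you go astray is in your ``expected main obstacle''. You propose to show that \emph{every} reachable state assigns agent $i\in R$ the same utility, so that the $\min$ in \cref{def:market_utility} is vacuous. That claim is false in general: if, say, total demand in $R$ is strictly less than total supply, then in equilibrium every buyer has exhausted her budget, but different sellers may have sold different amounts depending on who traded with whom --- so a seller $i\in R$ can have genuinely different $x_i$ (hence different $u_i$) across reachable states. The $\min$ is \emph{not} vacuous.

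Fortunately you do not need that claim. The simpler observation you already made --- that the mechanism's output $\{(I_j,\lambda)\}_j$ does not depend on $b_i$ for $i\in R$ --- is enough on its own. By \cref{def:reach}, the reachable set $\spi(\cM,\vb,\vA)$ depends on $\vb$ only through the constraints $\cM(\vb)$, so if those constraints are unchanged then $\spi$ is \emph{literally the same set}, and therefore $U_i = \min_{(\vx,\vp)\in\spi} u_i(\vx,\vp)$ is the same number regardless of $b_i$. No analysis of what that number actually is, and no appeal to \cref{lem:equilibrium_property} or resource-preference, is needed. Drop the obstacle paragraph and your proof is clean and complete.
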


\begin{lemma}\label{lem:uniform_large_ratio}
    Under the large market assumption that $\theta = \beta/ c$ (a large constant $c\gg 1$), \cref{alg:uniform_large} obtains $\frac{1-\beta}{2}\cdot (1-O(\delta))$ approximation with probability at least $1-6\exp(-c\delta^2/3)$, where $\delta$ is a small constant compared to $c$.
\end{lemma}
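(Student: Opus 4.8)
The plan is to prove \cref{lem:uniform_large_ratio} by showing that the sampled welfare $\opt(L)$ concentrates around $\beta\cdot\opt$, that the complementary resource mass $\vgam_R$ concentrates around $(1-\beta)\vgam$, and hence that the price $\lambda$ computed by the algorithm is, up to a $(1\pm O(\delta))$ factor, the price $\blam=\opt/(2\vgam)$ analyzed in \cref{lem:map}. Once this is established, I would rerun the two-case argument from the proof of \cref{lem:map} — but now only over the agents in $R$, since agents in $L$ are frozen by $I_i=[0,0]$ — and collect the extra $(1-\beta)$ factor that comes from only counting $R$'s contribution plus the $\Gamma_i$ terms of $L$, which are always present in $\mlw$ regardless of trade.

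First I would set up the concentration estimates. For the numerator, note that by \cref{lem:optallo} the optimal welfare decomposes additively over agents (each agent $i$ contributes $v_i\Gamma_i+\min\{v_ix_i^*,B_i\}$ in the optimal distribution), and each such term is bounded by $\theta\cdot\opt$ by large-market condition~(1); since $L$ is a $\beta$-random subset, $\opt(L)\ge \sum_{i\in L}(v_i\Gamma_i+\min\{v_ix_i^*,B_i\})$ has expectation $\beta\cdot\opt$, so a Chernoff/Hoeffding bound with $\theta=\beta/c$ gives $\opt(L)\ge \beta(1-\delta)\opt$ with probability $1-\exp(-\Omega(c\delta^2))$. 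The reverse direction (an upper bound on $\opt(L)$ in terms of $\opt$) is subtler because $\opt(L)$ is the optimum of the \emph{restricted} instance, not a subsum of the global optimum; here I would argue that any distribution feasible for $L$ extends to a distribution feasible for the whole market with the same $L$-contribution (the rest of the market can absorb the net trade, or we pad with a dummy), so $\opt(L)$ is itself a sum of per-agent contributions of agents in $L$ each bounded by $\theta\cdot\opt$, giving the matching upper tail. For the denominator, $\vgam_R=\sum_{i\in R}\Gamma_i$ — but here a single $\Gamma_i$ need not be small relative to $\vgam$, so I would instead use that $v_i\Gamma_i\le\theta\cdot\opt$ pointwise and that $\opt\le v_{\max}\vgam$-type bounds relate the $\Gamma_i$-mass to welfare; more carefully, I would split $\vgam$ into the buyers' and sellers' initial holdings and invoke large-market conditions (2) and (3) to ensure no single $\Gamma_i$ dominates the relevant sums, then apply Hoeffding to get $\vgam_R\in(1\pm O(\delta))(1-\beta)\vgam$ with the stated probability. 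Taking a union bound over the (constantly many) bad events yields the $1-6\exp(-c\delta^2/3)$ guarantee.

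Next, conditioned on all concentration events, $\lambda=\frac{1-\beta}{\beta}\cdot\frac{\opt(L)}{2\vgam_R}$ lies within a $(1\pm O(\delta))$ factor of $\blam=\frac{\opt}{2\vgam}$. I would then adapt the case analysis of \cref{lem:map} to the restricted market on $R$ under price $\lambda$: agents in $L$ contribute exactly $\sum_{i\in L}v_i\Gamma_i$ (their $\Gamma_i$-terms, with $x_i=0$), and agents in $R$ behave as free traders at price $\lambda$. In case (1) (no seller in $R$ has resources left), the $R$-sellers' resources all flow to $R$-buyers, and $\mlw\ge \lambda\cdot\vgam_R + \sum_{i\in L}v_i\Gamma_i \ge \lambda\vgam_R$, which by the value of $\lambda$ is $\ge\frac{1-\beta}{2}(1-O(\delta))\opt$. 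In case (2) (no $R$-buyer willing to buy), each $R$-buyer exhausts its budget so contributes $v_i\Gamma_i+B_i$, and the global optimum is upper-bounded by $\sum_{i}v_i\Gamma_i + \sum_{\text{buyers at }\lambda}B_i + (\text{sellers' max contribution})$; recycling the \cref{lem:map} inequality on the $R$-restricted quantities and adding back the always-present $\sum_{i\in L}v_i\Gamma_i$ terms gives $\mlw\ge\frac{1-\beta}{2}(1-O(\delta))\opt$ again — the key point being that $R$ contains a $(1-\beta)$-fraction of everything in expectation, which is exactly where the $(1-\beta)$ loss enters.

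The main obstacle I anticipate is the two-sided control of $\opt(L)$: lower-bounding it is easy (restrict the global optimal distribution to $L$), but upper-bounding $\opt(L)$ — the optimum of a \emph{different, smaller} market — by something like $(1+O(\delta))\beta\cdot\opt$ requires showing that the per-agent contribution ceiling $v_i\Gamma_i+\min\{v_ix_i^*,B_i\}\le\theta\cdot\opt$ transfers from the global optimal distribution $\vx^*$ to \emph{any} distribution optimal for $L$; this needs the monotone structure from \cref{lem:optallo} (sorting by value, buyers are the high-value agents) together with the observation that an agent's maximal possible contribution $v_i\Gamma_i+\min\{v_i\cdot(\text{everything it could buy}),B_i\}$ is itself controlled, so that the restricted optimum cannot concentrate too much welfare on any one sampled agent. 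A secondary technical nuisance is that $\vgam_R$ is a sum of potentially-heterogeneous $\Gamma_i$, so the Hoeffding bound must be applied with per-term bounds derived from $v_i\Gamma_i\le\theta\cdot\opt$ and a lower bound on the minimum buyer valuation; I would handle this by normalizing each $\Gamma_i$ by $\blam$ (turning resource units into welfare units) before invoking concentration.
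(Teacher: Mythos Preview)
Your overall architecture matches the paper, but the key step you flag as the main obstacle --- two-sided control of $\opt(L)$ --- does not go through as sketched, and neither does the concentration of $\vgam_R$.

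The lower bound you write, $\opt(L)\ge\mlw(\vx^*_L):=\sum_{i\in L}\bigl(v_i\Gamma_i+\min\{v_ix_i^*,B_i\}\bigr)$, is simply false: $\vx^*$ restricted to $L$ is not self-consistent, so it is not a feasible witness for $\opt(L)$ (e.g.\ if $L$ happens to contain only global buyers, $\opt(L)=\sum_{i\in L}v_i\Gamma_i$ while $\mlw(\vx^*_L)=\sum_{i\in L}(v_i\Gamma_i+B_i)$). The upper bound via per-agent ceilings is also shaky: condition~(1) bounds $v_i\Gamma_i+\min\{v_ix_i^*,B_i\}$, which for a global seller $i\ge k^*{+}2$ reduces to $0\le\theta\cdot\opt$ and says nothing about that agent's potential contribution inside a submarket where it becomes a buyer. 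And $\vgam_R$ need not concentrate under \cref{def:large_market}: the partition agent $k^*{+}1$ can carry essentially all of $\vgam$ while all three conditions hold (take $v_{k^*+1}$ tiny so that $v_{k^*+1}\Gamma_{k^*+1}\le\theta\cdot\opt$ even with $\Gamma_{k^*+1}$ enormous; condition~(3) only bounds $|x^*_{k^*+1}|$, not $\Gamma_{k^*+1}$). Indeed, the paper adds exactly this as a fourth condition in the multi-parameter variant precisely because it is \emph{not} available here.

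The paper resolves all of this at once by concentrating, via conditions (2) and (3), the sums $\sum_{i\in L\cap\cB^*}x_i^*$ and $\sum_{i\in L\cap\cS^*}|x_i^*|$ around their common mean $\beta\cdot\vx^*_{\cB^*}$. Once these two sums lie within a $(1\pm\delta)$ ratio, a uniform rescaling by at most $\tfrac{1+\delta}{1-\delta}$ either shrinks the $L$-buyers' $x_i^*$ to make $\vx^*_L$ self-consistent (a feasible witness, giving $\opt(L)\ge\tfrac{1-\delta}{1+\delta}\mlw(\vx^*_L)$), or inflates the $L$-sellers' endowments so that $\vx^*_L$ matches the \cref{lem:optallo} pattern and is therefore optimal for the inflated instance (giving $\opt(L)\le\tfrac{1+\delta}{1-\delta}\mlw(\vx^*_L)$). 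Together with the condition-(1) Chernoff bound on $\mlw(\vx^*_L)$ this pins $\opt(L)$ and, by complementarity, $\opt(R)$; the three Chernoff applications account for the $6\exp(-c\delta^2/3)$. The correct price comparison is then $\lambda\approx\opt(R)/(2\vgam_R)$ rather than $\opt/(2\vgam)$ --- and since $\lambda\cdot\vgam_R=\tfrac{1-\beta}{2\beta}\opt(L)$ by construction, $\vgam_R$ never needs to be concentrated on its own. After that, your \cref{lem:map} case split on $R$ finishes exactly as you describe.
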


The proofs are deferred to~\cref{sec:lem:uniform_large_truthful} and~\cref{sec:lem:uniform_large_ratio}. 
The truthfulness proof is straightforward due to the independent sampling, while for the approximation proof, we employ the optimal structure stated in~\cref{lem:optallo} and Chernoff bounds to show that with high probability, the sampled $\opt(L)$ is a good estimator of $\beta\cdot \opt$.  
Combining~\cref{lem:uniform_large_truthful} and~\cref{lem:uniform_large_ratio} proves~\cref{thm:large_market} immediately. Moreover, by \cref{lem:map_lower_bound}, we see that this result is almost tight.

\subsubsection*{Beyond the Single-Parameter Environment.} \cref{alg:uniform_large} can further be extended to a multi-parameter setting where all $B_i$ and $\Gamma_i$ are private. By adding an extra assumption that every $\Gamma_i$ is small compared to the sum, the same approximation can be obtained by a simple algorithm variant. A more detailed discussion is provided in~\cref{sec:app-mp}.

\section{\MECHS with Bounded Unprofitability}\label{sec:sp}

\Mechl achieves a good approximation ratio in large markets. In this section, we aim to design an alternative exchange market mechanism that performs well in not-so-large market environments, particularly when \(\theta\) is relatively large. Additionally, to ensure truthfulness, we allow this mechanism to have bounded unprofitability.
Our main results are presented in \cref{thm:mechs}.

\begin{theorem}\label{thm:mechs}
There exists a truthful exchange market mechanism that can achieve an approximation ratio of $1/2$, and the required subsidy does not exceed the total utility obtained by all agents under this mechanism.
\end{theorem}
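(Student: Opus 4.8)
The plan is to build directly on the structural results of \cref{sec:mop}. By \cref{lem:optallo}, an optimal solution orders agents by value, turns the top agents into budget-exhausting buyers, the bottom agents into full sellers, and leaves at most one pivotal agent on the boundary. The mechanism I would design mimics this while restoring truthfulness through \emph{differential} prices: sort the reported values $b_1 \ge b_2 \ge \cdots \ge b_n$; walk down the list admitting agents into a buyer set $\cB$ as long as the cumulative demand they would generate can still be met by the supply of the not-yet-processed agents; declare the rest sellers. This is the greedy/knapsack skeleton of the budget-feasible mechanism of~\cite{dobzinski2014efficiency}. The point of differential pricing — and where we gain over the uniform-price \cref{lem:map} — is that each admitted buyer $i$ receives a \emph{personalized} price $\lambda_i \le v_i$ set by a \emph{critical-value rule} (the infimum reported value at which $i$ would still be admitted), so that neither her admission nor her price depends on her own report; supply permitting, she then exhausts her budget and contributes her full $B_i + v_i\Gamma_i$ to $\mlw$. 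The sellers share one uniform price $\lambda$, chosen low (so the market is not over-subsidized) but above every seller's value so that unloading all $\Gamma_j$ units is strictly optimal. Finally the interval endpoints $r_i,l_j$ are picked so that total admitted demand equals total seller supply; then \cref{lem:equilibrium_property} applies, the stage-2 equilibrium is \emph{unique}, and the worst-case-over-reachable-states quantifier disappears from both the welfare and the incentive analysis — in particular every admitted buyer's demand $B_i/\lambda_i$ is met exactly.

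For truthfulness I would argue agent by agent. A designated seller sells everything and earns $(\lambda - v_j)\Gamma_j \ge 0$ regardless of her report; reporting a much larger value could only push her past the cut, where she faces a critical-value price that (because she sits below the cut) exceeds her true value, so she would rationally buy nothing and earn $0$ — no improvement. A designated buyer has her admission test and her price $\lambda_i$ fixed by the other reports; lowering $b_i$ can only drop her from $\cB$ (and \emph{resource-preferability} makes being an active buyer weakly preferred whenever $v_i \ge \lambda_i$), raising $b_i$ changes nothing she faces. The genuinely new, ``bi-level'' difficulty is that perturbing $b_i$ reshuffles the \emph{other} agents' constraints and hence the realized $(\vx,\vp)$; I would neutralize this by showing that the only features of the other reports relevant to $U_i$ are (i) the critical threshold controlling $i$'s own $(I_i,\lambda_i)$ and (ii) whether the market can serve her demand — and the demand-equals-supply invariant guarantees the latter always holds, so the collateral changes are irrelevant.

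For the approximation ratio I would run the two-case dichotomy of \cref{lem:map}, now keyed to where the greedy admission test halts. If every agent that \cref{lem:optallo} makes a buyer is admitted, then $\cB$ collects $\sum_{i\in\cB^*}(B_i+v_i\Gamma_i)$ while the sellers still contribute $\sum_j v_j\cdot(\text{retained units})$, which already dominates $\opt$; if the test stops earlier, the retained supply is large and the (high-valued) agents below the cut give enough $\mlw$ by themselves. Balancing the two bounds yields the factor $1/2$, matched by \cref{lem:map_lower_bound}. For the subsidy bound, write the total subsidy as $S=\sum_{j\in\cS}\lambda\,|x_j| - \sum_{i\in\cB}\lambda_i x_i$ and the total realized utility as $W=\sum_{i\in\cB}(v_i-\lambda_i)x_i+\sum_{j\in\cS}(\lambda-v_j)|x_j|$. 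Using self-consistency $\sum_{i\in\cB}x_i=\sum_{j\in\cS}|x_j|$, the personalized prices $\lambda_i$ cancel and $W-S$ collapses to $\sum_{i\in\cB}v_i x_i-\sum_{j\in\cS}v_j|x_j|\ge(\min_{i\in\cB}v_i-\max_{j\in\cS}v_j)\sum_{j\in\cS}|x_j|\ge 0$, the last step because $\cB$ consists of the higher-valued agents. Hence $S\le W$, as claimed.

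The step I expect to be the main obstacle is exactly the truthfulness analysis: unlike a classical auction, a unilateral deviation perturbs the entire rule and therefore the downstream equilibrium, so the standard ``freeze the others, inspect one agent's allocation curve'' argument does not apply verbatim. The crux is to isolate a statistic of the other agents' reports that is simultaneously (i) insensitive to $i$'s own report in the direction that could help $i$ and (ii) sufficient to pin down $U_i$, and to verify that the demand-equals-supply invariant — hence the equilibrium-uniqueness guarantee of \cref{lem:equilibrium_property} — survives every admissible deviation. Handling sellers who try to masquerade as buyers (and vice versa) while keeping the seller price and the critical thresholds mutually consistent is the delicate bookkeeping here.
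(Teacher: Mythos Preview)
Your plan diverges from the paper's in a structural way. The paper does \emph{not} use threshold (critical-value) pricing. It first fixes a \emph{single} allocation price $q$ (so every admitted buyer is allocated $B_i/q$ units, not $B_i/\lambda_i$), proves that the resulting allocation rule $x_i(\cdot,\vecv_{-i})$ is monotone in $v_i$ (this is the content of \cref{lem:monotonicity}, and the pivotal agent $k{+}1$ requires a case analysis), and then reads off each agent's price from the Myerson integral $p_i(v_i)=v_ix_i(v_i)-\int_{\hat\vecv_{-i}}^{v_i}x_i(z,\vecv_{-i})\,dz$. In particular sellers also get personalized Myerson prices, not a common $\lambda$. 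Truthfulness then follows from Myerson's lemma together with \cref{lem:several_properties}, which shows $p_i\le B_i$ and $(v_i-\lambda_i)x_i\ge0$, so the intended $x_i(v_i)$ is both feasible and the agent's preferred endpoint of $I_i$; \cref{lem:equilibrium_property} then makes that the unique reachable state.

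The genuine gap in your sketch is the interaction of threshold buyer prices with a uniform seller price strictly above all seller values. Take the lowest admitted buyer $k$: her critical value $\tau_k$ sits essentially at the cut, so her buyer utility $(v_k-\tau_k)\,B_k/\tau_k$ is close to~$0$. If she underreports and drops into $\cS$, the new seller price (however you pin it down ``above every seller's value'') is at least the new cut, hence $\ge v_{k-1}>v_k$, and her seller utility $(\lambda-v_k)\Gamma_k$ is strictly positive. Resource-preferability only breaks \emph{ties}; it cannot undo a strict utility gain from masquerading as a seller. This is precisely what the Myerson integral, applied symmetrically to buyers \emph{and} sellers, avoids: it makes utility continuous across the buyer/seller boundary. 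A related issue is balance: with per-buyer prices $\lambda_i=\tau_i$, the demand $\sum_{i\in\cB}B_i/\tau_i$ has no reason to equal $\sum_{j\in\cS}\Gamma_j$, and clipping via $r_i$ undercuts the ``every admitted buyer contributes $B_i+v_i\Gamma_i$'' step of your ratio argument. The paper's shared $q$ delivers balance by construction, and the $1/2$ bound is then a direct two-case computation (\cref{thm:2approx}).

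Your subsidy calculation, by contrast, is exactly the paper's: $W-S=\sum_i v_ix_i\ge0$ because buyers outvalue sellers and $\sum_i x_i=0$ (\cref{lem:boundC}).
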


\subsection{Mechanism Design}

In this subsection, we propose a \mechs and will subsequently prove that it meets the requirements specified in \cref{thm:mechs}.
Although agents do not trade according to a uniform price in this mechanism, it still adopts the idea from market uniform price and uniform price auction \cite{dobzinski2014efficiency}.
The mechanism divides buyers and sellers based on a uniform price and sets the exchange interval \(I_i\) for each agent accordingly. Nevertheless, the exchange price \(\lambda_i\) for each agent is not directly the uniform price itself but is instead a differentiated price for each individual.

Before describing \mechs in detail, we first define the notation $k$ and $q$ and the functions $x_i(v_i, \vecv_{-i})$ and $p_i(v_i, \vecv_{-i})$ that are used in the mechanisms. 
For simplicity, when emphasizing agent $i$, we will abbreviate $x_i(v_i, \vecv_{-i})$ and $p_i(v_i, \vecv_{-i})$ as $x_i(v_i)$ and $p_i(v_i)$, respectively.
Assume that the agents are ordered in descending order of their valuations, i.e., $v_1 \ge v_2 \ge \cdots \ge v_n$. Let $k = \max\left\{l \in [n] \mid \sum_{i=1}^l B_i \le v_l \cdot \sum_{i=l+1}^n \Gamma_i\right\}$, referred to as the \textit{partition point}, which is used to divide the agents into buyers and sellers.
Define 
$$
q = \begin{cases} 
    \frac{\sum_{i=1}^k B_i}{\sum_{i=k+1}^n \Gamma_i} & \text{if } \sum_{i=1}^k B_i > v_{k+1} \cdot \sum_{i=k+1}^n \Gamma_i \\
    v_{k+1} & \text{otherwise}
\end{cases}
$$
to mimic the role of a market uniform price.
The function $x_i(v_i, \vec{v}_{-i})$ is used to set the exchange interval, defined as follows:
\begin{equation}\label{eq:allo}
    x_i(v_i, \vecv_{-i}) = \begin{cases}
        \frac{B_i}{q} & \text{for } i = 1, 2, \ldots, k \\
        \sum_{i=k+2}^n \Gamma_i - \sum_{i=1}^k \frac{B_i}{q} & \text{for } i = k+1 \\
        -\Gamma_i & \text{for } i = k+2, \ldots, n
    \end{cases}.
\end{equation}
And the function $p_i$ is used to set the exchange price, defined as follows:
\begin{equation}\label{payment}
  p_i(v_i, \vecv_{-i}) = v_i x_i(v_i, \vecv_{-i}) - \int_{\hat{\vecv}_{-i}}^{v_i} x_i(z, {\bf v}_{-i}) \dif z,
\end{equation}
where $\hat{\vecv}_{-i}$ is the solution satisfying $x_i(\hat{\vecv}_{-i} - \delta, {\bf v}_{-i}) \le 0$ and $x_i(\hat{\vecv}_{-i} + \delta, {\bf v}_{-i}) \ge 0$, for any $\delta > 0$.
\footnote{According to \cref{lem:monotonicity}, and given that for any fixed ${\bf v}_{-i}$, $x_i(0, {\bf v}_{-i})=-\Gamma_i, x_i(+\infty, {\bf v}_{-i})=\frac{B_i}{q}$, $\hat{\vecv}_{-i}$ must exist. Furthermore, if there are multiple distinct values of $\hat{\vecv}_{-i}$, the corresponding $\int_{\hat{\vecv}_{-i}}^{v_i} x_i(z, {\bf v}_{-i}) \dif z$ are all identical. Therefore, $p_i(v_i, \vecv_{-i})$ is well-defined.}

Now, we can formally describe \mechs, as shown in \cref{alg:single}.

\begin{algorithm}[H]
  \caption{\MECHS}\label{alg:single}
  \begin{algorithmic}[1]
    \Require Agent set $\vA=\{(v_i, B_i, \Gamma_i)\}_{i\in [n]}$ \Comment{Assume $v_1 \ge v_2 \ge \cdots \ge v_n$}
  \Ensure Exchange constraints $(\vI,\vlam)$.

  \For{$i = 1$ to $n$}

    \State $I_i = \left[\min\left\{x_i(v_i),0\right\},\max\left\{x_i(v_i),0\right\}\right]$ \Comment{Using \cref{eq:allo} to calculate the function $x_i$.}

    \If{$x_i(v_i)\neq 0$}
        \State $\lambda_i=p_i(v_i)/x_i(v_i)$ \Comment{Using \cref{payment} to calculate the function $p_i$.}
    \Else
        \State $\lambda_i = v_i$
    \EndIf
    
  \EndFor
  \State \Return $\{(I_i, \lambda_i)\}_{i\in [n]}$

  \end{algorithmic}
\end{algorithm}

According to \cref{lem:monotonicity}, it can be easily proven that $x_i(v_i) p_i(v_i) \ge 0$, and $x_i(v_i) = 0$ if and only if $p_i(v_i) = 0$. Therefore, when $x_i(v_i) \neq 0$, we have $\lambda_i = \frac{p_i(v_i)}{x_i(v_i)} > 0$.
Consequently, for all $i$, $\lambda_i \ge 0$, which ensures that the exchange price $\lambda_i$ is well-defined.

\subsection{Market Outcomes}\label{sec:market_outcome_mechs}

This subsection analyzes the truthfulness of \mechs and the reachable market state $(\vx,\vp)$ that agents can achieve through free trade under this mechanism.
First, we present the monotonicity of $x_i(v_i, \vecv_{-i})$ and some properties of \mechs, as shown in \cref{lem:monotonicity,lem:several_properties}. The detailed proofs are provided in~\cref{sec:lem:monotonicity,sec:lem:several_properties}.

\begin{lemma}[Monotonicity]\label{lem:monotonicity}
The function $x_i(v_i, \vecv_{-i})$ is non-decreasing with respect to $v_i$. Specifically, for any agent $i$, an increase in $v_i$ to $v_i'$, while keeping $\vecv_{-i}$ unchanged, results in an allocation where $x_i(v_i', \vec{v}_{-i}) \geq x_i(v_i, \vecv_{-i})$.
\end{lemma}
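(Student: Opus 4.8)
The plan is to prove monotonicity of $x_i(v_i, \vecv_{-i})$ by carefully tracking how changing agent $i$'s reported valuation $v_i$ affects the ordered sequence of agents, the partition point $k$, and the mimicked uniform price $q$, and then checking each of the three cases in the definition \eqref{eq:allo}. First I would fix $\vecv_{-i}$ and consider increasing $v_i$ to $v_i'>v_i$. The immediate effect is that agent $i$ weakly moves up in the descending order; all other agents retain their relative order among themselves. So it suffices to understand how the two quantities $k$ and $q$ respond, and then how $x_i$ responds given $i$'s new rank.

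The key steps, in order, would be: (1) Show that the partition point $k$ (as a position in the reordered list) is ``monotone'' in the right sense — namely, raising $v_i$ can only (weakly) move agent $i$ from the seller side to the buyer side, never the reverse, and for the other agents the buyer/seller classification is preserved except possibly for agents whose valuation lies between $v_i$ and $v_i'$, who can only shift toward being sellers relative to agent $i$ but whose absolute classification (buyer vs.\ seller) I would argue is unchanged or changes monotonically. This requires unwinding the definition $k=\max\{l : \sum_{i=1}^l B_i \le v_l \sum_{i=l+1}^n \Gamma_i\}$ and arguing about how both sides of that inequality move. (2) Establish that the price $q$ is non-increasing as $v_i$ increases, or more precisely control its direction of change in each regime; intuitively, adding more ``buying pressure'' (a higher-valuation agent) should not raise the clearing price in this construction, but one has to check both branches of the definition of $q$ and the boundary case $\sum_{i=1}^k B_i$ vs.\ $v_{k+1}\sum_{i=k+1}^n \Gamma_i$. (3) Case analysis on agent $i$'s position: if $i$ is a buyer both before and after ($i \le k$ in both), then $x_i = B_i/q$ and non-increasing $q$ gives $x_i' \ge x_i$; if $i$ crosses from seller to buyer, then $x_i$ jumps from $-\Gamma_i$ (or from the $k+1$ residual value) up to $B_i/q \ge 0$, which is an increase; if $i = k+1$ in one of the configurations, handle the residual term $\sum_{j=k+2}^n \Gamma_j - \sum_{j=1}^k B_j/q$ separately, using self-consistency of the allocation.

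I expect the main obstacle to be step (1) combined with the residual agent $k+1$: the interaction between a moving partition point $k$ and the definition of $x_{k+1}$ as a balancing residual is delicate, because when agent $i$ itself is near the boundary, both ``which agents count in the sums'' and ``what $q$ equals'' change simultaneously, and one must ensure these changes don't conspire to decrease $x_i$. A clean way to handle this may be to argue directly from an equilibrium/market-clearing characterization: the allocation in \eqref{eq:allo} is exactly the optimal resource distribution of \cref{lem:optallo} computed on the reported profile, so $x_i(v_i,\vecv_{-i})$ equals agent $i$'s allocation in the welfare-optimal distribution, and one can prove monotonicity by a local exchange argument — if raising $v_i$ strictly decreased $x_i$, one could swap resources back toward $i$ in the new optimal solution and contradict optimality (or contradict the greedy/threshold structure). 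I would likely present the proof via this optimal-structure route, falling back on the explicit three-case computation only where the exchange argument needs the precise form of $q$.

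Finally, I would note the edge cases that must be checked for completeness: ties in valuations (the ordering is only weak, so one must fix a consistent tie-breaking rule and verify it does not affect $x_i$), the degenerate situations $x_i(0,\vecv_{-i}) = -\Gamma_i$ and $x_i(+\infty,\vecv_{-i}) = B_i/q$ (already flagged in the footnote, and needed to guarantee existence of the crossing point $\hat{\vecv}_{-i}$), and the behavior exactly at $v_i$ values where $k$ jumps, where one should confirm $x_i$ is (weakly) continuous from the appropriate side so that no downward jump is hidden at the transition.
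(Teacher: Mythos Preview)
Your proposal contains two genuine gaps that would block the argument as stated.

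First, the ``optimal-structure route'' you favor rests on a false identification: the allocation $x_i(v_i,\vecv_{-i})$ in \eqref{eq:allo} is \emph{not} the optimal distribution of \cref{lem:optallo}. In \cref{lem:optallo}, each buyer receives $B_i/v_i$ and the partition point $k^*$ is defined by $\sum_{i\le l} B_i/v_i \le \sum_{i>l}\Gamma_i$, whereas in the mechanism each buyer receives $B_i/q$ at a common price $q$, and $k$ is defined by $\sum_{i\le l} B_i \le v_l\sum_{i>l}\Gamma_i$. These differ in general (take, e.g., three agents with $v=(10,5,1)$, $B=(10,10,10)$, $\Gamma=(0,0,5)$). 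Since the mechanism allocation is not welfare-optimal, the exchange argument ``if raising $v_i$ decreased $x_i$, swap resources back to contradict optimality'' has no force.

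Second, your step (2) heads in the wrong direction: the intuition that more buying pressure should \emph{lower} $q$ is backward (a higher-valuation bidder raises, not lowers, a clearing price), and concretely when $q=v_{k+1}$, raising $v_{k+1}$ raises $q$. What the paper actually exploits is much simpler: if agent $i$ already has rank $\le k$, then increasing $v_i$ leaves the \emph{set} of top-$k$ agents, the value $v_{k+1}$, and hence $q$ and $x_i=B_i/q$ all unchanged; likewise if $i$ has rank $>k+1$, $x_i=-\Gamma_i$ is unchanged while $v_i<v_{k+1}$. Thus the entire proof reduces to the single transitional rank $i=k+1$, where one directly computes $x_{k+1}=\sum_{j\ge k+2}\Gamma_j-\sum_{j\le k}B_j/q$ as a function of $v_{k+1}$ and checks monotonicity across the sub-cases for $q$ (and the jump at $v_{k+1}=v_k$). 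Your tie-breaking and boundary-jump remarks are on point and are exactly what this case analysis must handle, but the global claim about $q$ is neither correct nor needed.
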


\begin{lemma}\label{lem:several_properties}
For all $i \in [n]$, for all $v_i \in \R_+$, and for all $\vecv_{-i} \in \R_+^{n-1}$, the following properties hold:
\begin{itemize}[left=3em]
    \item[(1)] $(v_i - \lambda_i) x_i(v_i) \ge 0$. Therefore, when $v_i \ge \lambda_i$, $I_i=[0,x_i(v_i)] \subset \R_+$, i.e., agent $i$ can only act as a buyer; and when $v_i < \lambda_i$, $I_i=[x_i(v_i),0] \subset \R_-$, i.e., agent $i$ can only act as a seller;
    \item[(2)] $\frac{B_i}{\lambda_i} \ge x_i(v_i)$ and $x_i(v_i) \ge -\Gamma_i$.
\end{itemize}
\end{lemma}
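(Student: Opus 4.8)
The two properties both follow from unpacking the payment formula~\eqref{payment} together with the monotonicity established in Lemma~\ref{lem:monotonicity}, so I would structure the proof around a careful case analysis on the sign of $x_i(v_i)$. First I would observe that by Lemma~\ref{lem:monotonicity} and the boundary behavior noted in the footnote ($x_i(0,\vecv_{-i})=-\Gamma_i$, $x_i(+\infty,\vecv_{-i})=B_i/q$), the function $z\mapsto x_i(z,\vecv_{-i})$ is non-decreasing, takes a non-positive value for $z\le\hat{\vecv}_{-i}$ and a non-negative value for $z\ge\hat{\vecv}_{-i}$. Consequently the integral $\int_{\hat{\vecv}_{-i}}^{v_i} x_i(z,\vecv_{-i})\dif z$ has the same sign as $v_i-\hat{\vecv}_{-i}$, i.e. the same sign as $x_i(v_i)$ (and is zero exactly when $x_i(v_i)=0$). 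This immediately gives $x_i(v_i)p_i(v_i)\ge 0$ and the ``iff zero'' claim already invoked right after the algorithm, and it is the workhorse for both parts.

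For property~(1), I would split into $x_i(v_i)>0$, $x_i(v_i)<0$, and $x_i(v_i)=0$. In the zero case $\lambda_i=v_i$ by definition, so $(v_i-\lambda_i)x_i(v_i)=0$. For $x_i(v_i)\ne 0$ I would write $(v_i-\lambda_i)x_i(v_i) = v_i x_i(v_i) - p_i(v_i) = \int_{\hat{\vecv}_{-i}}^{v_i} x_i(z,\vecv_{-i})\dif z$, which is $\ge 0$ by the sign observation above. The structural consequences on $I_i$ then follow directly from the definition $I_i=[\min\{x_i(v_i),0\},\max\{x_i(v_i),0\}]$: if $v_i\ge\lambda_i$ then $x_i(v_i)\ge 0$ hence $I_i=[0,x_i(v_i)]$, and symmetrically for $v_i<\lambda_i$.

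For property~(2) I would again argue by the sign of $x_i(v_i)$. The bound $x_i(v_i)\ge-\Gamma_i$ is essentially immediate from the definition~\eqref{eq:allo}: sellers have $x_i=-\Gamma_i$, the pivot agent $k+1$ has $x_{k+1}(v_{k+1})\ge-\Gamma_{k+1}$ because $k$ is chosen maximal (so $\sum_{i=1}^{k+1}B_i > v_{k+1}\sum_{i=k+2}^n\Gamma_i\ge q\sum_{i=k+2}^n\Gamma_i$ when $q=v_{k+1}$, and directly when $q=\frac{\sum B_i}{\sum\Gamma_i}$), and buyers have $x_i=B_i/q>0>-\Gamma_i$. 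For $\frac{B_i}{\lambda_i}\ge x_i(v_i)$: when $x_i(v_i)\le 0$ this is trivial since the left side is non-negative; when $x_i(v_i)>0$ it rearranges to $B_i\ge \lambda_i x_i(v_i)=p_i(v_i)$, i.e. the payment never exceeds the budget, which I would get by bounding $p_i(v_i)=v_i x_i(v_i)-\int_{\hat{\vecv}_{-i}}^{v_i}x_i(z,\vecv_{-i})\dif z \le v_i x_i(v_i) - (v_i-v_i^{\circ})\cdot x_i(v_i)$ where $v_i^{\circ}$ is the infimum valuation at which $x_i$ already attains its current (maximal) value $B_i/q$; combined with $x_i(v_i)\le B_i/q$ and $q\le v_i^\circ$ this yields $p_i(v_i)\le v_i^\circ \cdot (B_i/q)\le B_i$. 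The main obstacle I anticipate is precisely this last estimate: because $x_i(z,\vecv_{-i})$ is a step-like function of $z$ (it jumps as $i$'s rank among the sorted valuations changes and as the partition point $k$ moves), I will need to track exactly where the threshold crossings occur and verify that the relevant ``plateau value'' of $x_i$ is bounded by $B_i/q$ with $q$ no larger than the valuation at which that plateau begins — this bookkeeping with the definition of $k$ and the two cases in the definition of $q$ is where the real work lies, whereas the sign arguments for property~(1) are routine.
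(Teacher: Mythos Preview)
Your plan for part~(1) is exactly the paper's argument: rewrite $(v_i-\lambda_i)x_i(v_i)=v_i x_i(v_i)-p_i(v_i)=\int_{\hat{\vecv}_{-i}}^{v_i} x_i(z,\vecv_{-i})\,\dif z$ and observe from monotonicity and the definition of $\hat{\vecv}_{-i}$ that this integral is always non-negative.

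For part~(2) your overall structure (split on the sign of $x_i(v_i)$; for $x_i(v_i)>0$ reduce to $p_i(v_i)\le B_i$ via a plateau bound on the integral) is again the paper's. But the final chain has the inequality direction reversed: you write $q\le v_i^\circ$, yet the step $v_i^\circ\cdot(B_i/q)\le B_i$ requires $v_i^\circ\le q$. The correct claim, and the one the paper actually uses, is that for $i\le k$ the plateau $x_i(z)=B_i/q$ extends down at least to $z=q$, so $v_i^\circ\le q$; this gives $p_i(v_i)\le v_i^\circ\,x_i(v_i)\le q\cdot(B_i/q)=B_i$ as desired.

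A related slip is that your phrase ``current (maximal) value $B_i/q$'' conflates the $i\le k$ and $i=k+1$ cases. When $i=k+1$ with $x_{k+1}(v_{k+1})>0$, the current allocation is strictly below $B_{k+1}/q$ and there is no plateau (so in your notation $v_i^\circ=v_{k+1}$). The paper treats this case separately: it observes that necessarily $q=v_{k+1}$ here, then invokes part~(1) to get $p_{k+1}\le v_{k+1}\,x_{k+1}(v_{k+1})=v_{k+1}\sum_{j\ge k+2}\Gamma_j-\sum_{j\le k}B_j<B_{k+1}$, the last inequality being precisely the maximality of $k$. Separating the two cases cleanly avoids the ambiguity in your unified bound.
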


Based on the above Lemmas, we can begin to analyze the utility of any agent $i$ when reporting their valuation truthfully and when misreporting. Fix the valuations of the other agents as $\vecv_{-i}$.

When the agent truthfully reports their valuation $v_i$, let the exchange interval and exchange price output by the mechanism be denoted as $\vI=(I_1,I_2,\cdots,I_n)$ and $\vlam=(\lambda_1,\lambda_2,\cdots,\lambda_n)$, respectively. For any $j \in [n]$, define $x_j = x_j(v_j, \vecv_{-j})$ and $p_j = p_j(v_j, \vecv_{-j})$.

Let $\cB = \{j \mid v_j \ge \lambda_j\}$ and $\cS = \{j \mid v_j < \lambda_j\}$. According to Lemma~\ref{lem:several_properties}, we have 
$$
\sum_{j \in \cB} \min\{x_j, B_j/\lambda_j\} + \sum_{j \in \cS} \max\{x_j, -\Gamma_j\} = \sum_{j \in \cB} x_j + \sum_{j \in \cS} x_j = \sum_{j \in [n]} x_j = 0.
$$
From \cref{lem:equilibrium_property}, we can deduce that the reachable resource distribution $\vx$ must be $\vx = \{x_1, x_2, \cdots, x_n\}$ and the payment $\vp$ must be $\vp = \{p_1,p_2,\cdots,p_n\}$, where $p_i=\lambda_i x_i$ for all $i\in [n]$.
Therefore, the utility of agent $i$ is $u_i=u_i(\vx, \vp) = v_i x_i - p_i = v_i x_i - \lambda_i x_i$.

Next, we analyze the case when agent $i$ misreports their valuation as $v_i'$. Let $\vecv' = (v_i', \vecv_{-i})$. The exchange interval and exchange price output by the mechanism with input $\vecv'$ are denoted as $\vI' = (I_1', I_2', \cdots, I_n')$ and $\vlam' = (\lambda_1', \lambda_2', \cdots, \lambda_n')$, respectively. For any $j \in [n]$, define $x_j' = x_j(\vecv')$ and $p_j' = p_j(\vecv')$.

According to part (1) of \cref{lem:several_properties}, an agent $j\neq i$ who reports truthfully will desire to trade a quantity $x_j'$ of the resource to maximize their utility. Given that $\sum_{j \in [n]} x_j' = 0$ still holds, agent $i$ can choose to trade any quantity of the resource within the interval $I_i = \left[\min\left\{x_i', 0\right\}, \max\left\{x_i', 0\right\}\right]$.

When $v_i$ and $v_i' \ge \lambda_i'$, agent $i$ wants to buy as much resource as possible. According to \cref{lem:several_properties}, part (1), in this case, $I_i' \subset \R_+$, and thus $u_i' = (v_i - \lambda_i') \cdot x_i' = v_i x_i' - p_i'$. When $v_i \ge \lambda_i'$ and $v_i' < \lambda_i'$, agent $i$ still wants to buy as much resource as possible. However, according to \cref{lem:several_properties}, part (1), in this case, $I_i' \subset \R_-$, meaning the agent can only sell resources. Consequently, agent $i$ will choose not to sell, resulting in $u_i' = 0$.
Similarly, when $v_i$ and $v_i' < \lambda_i'$, $u_i' = v_i x_i' - p_i'$. When $v_i < \lambda_i'$ and $v_i' \ge \lambda_i'$, $u_i' = 0$.

With these analyses, we can now begin to prove \cref{thm:truthfulness}.

\begin{lemma}
    [Truthfulness]~\label{thm:truthfulness}
  \Mechs is truthful.
\end{lemma}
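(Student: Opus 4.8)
The plan is to reduce the truthfulness statement to Myerson-style single-parameter arguments, exploiting the setup already laid out in the paragraphs preceding the lemma. The key observation is that the mechanism's preceding discussion has already computed agent $i$'s utility under truthful and misreported valuations: reporting $v_i$ yields $u_i = v_i x_i(v_i) - p_i(v_i)$, while reporting $v_i'$ yields either $u_i' = v_i x_i(v_i') - p_i(v_i')$ (when $v_i$ and $v_i'$ lie on the same side of the threshold $\lambda_i'$) or $u_i' = 0$ (when they lie on opposite sides, since the agent then refuses to trade). So the proof splits into two regimes depending on whether the misreport crosses the buyer/seller boundary.

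First I would handle the same-side case. Here the argument is purely the standard Myerson payment identity: substituting the definition~\eqref{payment} of $p_i$, we get $u_i' = v_i x_i(v_i') - v_i' x_i(v_i') + \int_{\hat{\vecv}_{-i}}^{v_i'} x_i(z,\vecv_{-i})\,\dif z = (v_i - v_i') x_i(v_i') + \int_{\hat{\vecv}_{-i}}^{v_i'} x_i(z)\,\dif z$, and similarly $u_i = (v_i - v_i)x_i(v_i) + \int_{\hat{\vecv}_{-i}}^{v_i} x_i(z)\,\dif z = \int_{\hat{\vecv}_{-i}}^{v_i} x_i(z)\,\dif z$. Then $u_i - u_i' = \int_{v_i'}^{v_i} x_i(z)\,\dif z - (v_i - v_i')x_i(v_i')$, and the monotonicity of $x_i(\cdot)$ from~\cref{lem:monotonicity} makes this nonnegative whether $v_i' < v_i$ or $v_i' > v_i$ (the integral of a monotone function versus a rectangle at the endpoint). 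One must be a little careful that ``same side'' really does let us apply this: if $v_i, v_i' \ge \lambda_i'$ then $x_i(v_i') \ge 0$ and the agent buys exactly $x_i(v_i')$; if $v_i, v_i' < \lambda_i'$ then $x_i(v_i') \le 0$ and the agent sells exactly $|x_i(v_i')|$ — in both cases the realized $(x,p)$ is $(x_i(v_i'), p_i(v_i'))$ as claimed in the preamble, so the formula for $u_i'$ is valid.

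Second I would handle the crossing case, where $u_i' = 0$. Here I need $u_i \ge 0$, i.e., truthful reporting yields nonnegative utility. This follows again from the payment identity: $u_i = \int_{\hat{\vecv}_{-i}}^{v_i} x_i(z)\,\dif z$, and by definition of $\hat{\vecv}_{-i}$ we have $x_i(z) \le 0$ for $z < \hat{\vecv}_{-i}$ and $x_i(z) \ge 0$ for $z > \hat{\vecv}_{-i}$ (using monotonicity), so the integral from $\hat{\vecv}_{-i}$ to $v_i$ is nonnegative regardless of whether $v_i$ is above or below $\hat{\vecv}_{-i}$. Hence $u_i \ge 0 = u_i'$. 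I should also note the degenerate subcase $x_i(v_i) = 0$, where $\lambda_i = v_i$ and the agent trades nothing for zero payment, giving $u_i = 0$, which is still $\ge u_i'$ in every case by the same reasoning.

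The main obstacle I anticipate is not any single computation but rather the bookkeeping of which realized trade the agent actually executes in stage~2 after a misreport — the subtlety is that the \emph{other} agents react to the reported profile $\vecv' = (v_i', \vecv_{-i})$ and pin down the rest of the allocation via~\cref{lem:equilibrium_property}, leaving agent $i$ free to pick any point in $I_i'$, and one must verify that a rational resource-preferable agent's best choice in $I_i'$ is exactly the endpoint that makes $u_i'$ equal to the expressions above (in particular, that when the sign of $I_i'$ is ``wrong'' relative to what the agent wants, the optimal choice is $0$, giving $u_i' = 0$). This is exactly the content already sketched in the paragraphs before the lemma, so in the proof I would simply invoke that analysis and then carry out the two-case Myerson argument. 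I would also want to double-check that $\hat{\vecv}_{-i}$ being possibly non-unique causes no issue — but the footnote already records that the integral $\int_{\hat{\vecv}_{-i}}^{v_i} x_i(z)\,\dif z$ is independent of the choice, so $p_i$ and hence all utilities are well-defined.
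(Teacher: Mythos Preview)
Your proposal is correct and follows essentially the same Myerson-style argument as the paper: monotonicity of $x_i$ plus the payment identity~\eqref{payment} gives both $u_i \ge v_i x_i' - p_i'$ and $u_i \ge 0$, which together dominate every form of $u_i'$. The only cosmetic difference is that the paper avoids your case split by proving the stronger claim $u_i \ge \max\{v_i x_i' - p_i',\,0\}$ uniformly, invoking Myerson's Lemma (\cref{lem:Myerson}) as a black box and citing \cref{lem:several_properties}(1) for nonnegativity rather than re-deriving the integral inequality by hand.
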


\begin{proof}
Based on the above analysis, to prove truthfulness, it suffices to show that $u_i \ge u_i'$. Although $u_i'$ has different expressions under different conditions, we can prove a stronger conclusion, namely that $u_i$ is greater than or equal to $u_i'$ in any form. This means proving that $v_i x_i - p_i \ge v_i x_i' - p_i'$ as well as $v_i x_i - p_i \ge 0$. The inequality $v_i x_i - p_i \ge 0$ can be directly obtained from part (1) of Lemma~\ref{lem:several_properties}. Next, we prove that $v_i x_i - p_i \ge v_i x_i' - p_i'$.

  According to \cref{lem:monotonicity}, $x_i(v_i,\vecv_{-i})$ is monotonic. 
  Furthermore, the payment can be written as 
\begin{equation*}
   p_i = v_ix_i - \int_{\hat{\vecv}_i}^{v_i} x(z, {\bf v}_{-i}) \dif z = v_ix_i - \int_0^{v_i} x(z, {\bf v}_{-i}) \dif z + h({\bf v}_{-i}),  
\end{equation*}
  where $ h({\bf v}_{-i})=\int_0^{\hat{\vecv}_{-i}} x(z, {\bf v}_{-i}) \dif z$ is a term independent of $v_i$. Therefore, according to Myerson's Lemma (\cref{lem:Myerson}), $v_i x_i(v_i, \vecv_{-i}) - p_i(v_i, \vecv_{-i}) \ge v_i x_i(v_i', \vecv_{-i}) - p_i(v_i', \vecv_{-i})$, i.e., $v_i x_i - p_i \ge v_i x_i' - p_i'$.
\end{proof}

Within the framework of \mechs, it is conceivable that \(\sum_{i \in [n]} p_i \le 0\), indicating that the market manager, such as the government, may need to provide a subsidy to implement this mechanism and achieve a higher welfare. \cref{lem:boundC} shows that the amount of the subsidy is bounded. The proof can be found in~\cref{sec:lem:boundC}.

\begin{lemma}[Bounded Compensation]\label{lem:boundC}
The compensation amount does not exceed the total utility obtained by all agents, specifically, \(\sum_{i \in [n]} -p_i \leq \sum_{i \in [n]} u_i\), where $u_i=v_i x_i-p_i$.
\end{lemma}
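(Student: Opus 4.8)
The plan is to show the chain of inequalities $\sum_{i}(-p_i) \le \sum_{i} p_i^{+} \le \sum_{i} u_i$, where I split the agents according to whether they end up as buyers or sellers under the differential prices. For a seller $j$ (i.e.\ $v_j < \lambda_j$, so $x_j \le 0$ and $p_j = \lambda_j x_j \le 0$), the quantity $-p_j \ge 0$ is exactly the money the market pays out; for a buyer $i$ ($x_i \ge 0$, $p_i \ge 0$) the quantity $p_i \ge 0$ is money the market takes in. So the net subsidy is $\sum_i (-p_i) = \sum_{j:\, x_j<0}(-p_j) - \sum_{i:\, x_i>0} p_i$, and to bound it it suffices to bound $\sum_{j:\,x_j<0}(-p_j)$ from above by $\sum_i u_i = \sum_i (v_i x_i - p_i)$. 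Since on the seller side $u_j = v_j x_j - p_j = (v_j-\lambda_j)x_j \ge 0$ and on the buyer side $u_i = (v_i-\lambda_i)x_i \ge 0$ by part (1) of \cref{lem:several_properties}, every term in $\sum_i u_i$ is nonnegative, so it will be enough to find, for each seller $j$, enough ``utility mass'' to pay for $-p_j$.

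The key computational step is to get a usable closed form for $p_j$ when $j$ is a seller. For such $j$ we have $x_j(v_j,\vecv_{-j}) \le 0$, i.e.\ $v_j$ lies at or below the threshold $\hat{\vecv}_{-j}$ where $x_j$ changes sign. Using \eqref{payment}, $p_j = v_j x_j - \int_{\hat{\vecv}_{-j}}^{v_j} x_j(z,\vecv_{-j})\,\dif z = v_j x_j + \int_{v_j}^{\hat{\vecv}_{-j}} x_j(z,\vecv_{-j})\,\dif z$. On the interval $[v_j,\hat{\vecv}_{-j}]$ the integrand $x_j(z,\vecv_{-j})$ is nonpositive (by monotonicity, \cref{lem:monotonicity}, and the definition of $\hat{\vecv}_{-j}$) and bounded below by $-\Gamma_j$ (by part (2) of \cref{lem:several_properties}), and $x_j = x_j(v_j,\vecv_{-j}) \ge -\Gamma_j$ as well. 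Hence $-p_j = -v_j x_j - \int_{v_j}^{\hat{\vecv}_{-j}} x_j(z,\vecv_{-j})\,\dif z = |v_j x_j| + \int_{v_j}^{\hat{\vecv}_{-j}} |x_j(z,\vecv_{-j})|\,\dif z$, so $-p_j$ is at most $v_j\Gamma_j$ plus $\Gamma_j(\hat{\vecv}_{-j}-v_j)$; more to the point, $-p_j \le \hat{\vecv}_{-j}\cdot\Gamma_j$ roughly, but what I actually want is a comparison with the utilities rather than a crude bound. The cleanest route is: $-p_j = (\lambda_j - v_j)\,|x_j| = (\lambda_j - v_j)\,\Gamma_j$ when the seller is fully emptied, or $(\lambda_j-v_j)|x_j|$ in general, while $u_j = (\lambda_j - v_j)|x_j|$ — wait, that would make $-p_j = u_j$ for sellers, which immediately gives $\sum_{j:\,x_j<0}(-p_j) = \sum_{j:\,x_j<0} u_j \le \sum_i u_i$, and then subtracting the nonnegative buyer payments only helps. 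So in fact the identity $-p_j = u_j$ on the seller side, combined with nonnegativity of buyer utilities and buyer payments, is the whole proof.

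Let me restate the argument in the order I would write it. First, recall from \cref{sec:market_outcome_mechs} that under truthful reports the reachable state has $x_i = x_i(v_i,\vecv_{-i})$, $p_i = \lambda_i x_i$, and $u_i = v_i x_i - p_i = (v_i-\lambda_i)x_i \ge 0$ for every $i$, using part (1) of \cref{lem:several_properties}. Second, partition $[n]$ into $\cB = \{i : x_i \ge 0\}$ (buyers, with $v_i \ge \lambda_i$ hence $p_i \ge 0$) and $\cS = \{j : x_j < 0\}$ (sellers, with $v_j < \lambda_j$ hence $p_j = \lambda_j x_j < 0$). Third, observe $\sum_i (-p_i) = \sum_{j\in\cS}(-p_j) + \sum_{i\in\cB}(-p_i) \le \sum_{j\in\cS}(-p_j)$ since $-p_i \le 0$ for $i\in\cB$. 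Fourth, for $j\in\cS$ note $-p_j = -\lambda_j x_j = \lambda_j|x_j|$ and $u_j = (v_j-\lambda_j)x_j = (\lambda_j - v_j)|x_j|$, so $-p_j = u_j + v_j|x_j| \ge u_j$ — hmm, that is the wrong direction. So I do need the extra slack: $-p_j - u_j = v_j|x_j| \ge 0$, meaning $-p_j \ge u_j$, not $\le$. This means the naive pairing fails and the buyer side must actually contribute: the point is that $\sum_{j\in\cS} v_j|x_j|$ — the ``value destroyed'' on the seller side — must be paid for by the buyers. The hard part will be exactly this: showing $\sum_{j\in\cS} v_j|x_j| \le \sum_{i\in\cB}(v_i x_i - p_i) + \sum_{i\in\cB} p_i = \sum_{i\in\cB} v_i x_i$, i.e.\ $\sum_{j\in\cS} v_j|x_j| \le \sum_{i\in\cB} v_i x_i$. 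Since $\sum_i x_i = 0$ gives $\sum_{i\in\cB} x_i = \sum_{j\in\cS}|x_j|$, and every buyer has $v_i \ge q \ge$ every seller's $v_j$ (the partition point and the construction of $q$ ensure all buyer valuations dominate $q$ and all seller valuations are at most $q$), we get $\sum_{i\in\cB} v_i x_i \ge q\sum_{i\in\cB} x_i = q \sum_{j\in\cS}|x_j| \ge \sum_{j\in\cS} v_j|x_j|$. Combining: $\sum_i(-p_i) \le \sum_{j\in\cS}(-p_j) = \sum_{j\in\cS} u_j + \sum_{j\in\cS} v_j|x_j| \le \sum_{j\in\cS} u_j + \sum_{i\in\cB} v_i x_i = \sum_{j\in\cS} u_j + \sum_{i\in\cB}(u_i + p_i) \ge$ — I must be careful about direction again, but since $p_i \ge 0$ this over-counts on the right, so instead I bound $\sum_{j\in\cS} v_j|x_j| \le \sum_{i\in\cB} v_i x_i = \sum_{i\in\cB} u_i + \sum_{i\in\cB} p_i$ and note the leftover $\sum_{i\in\cB} p_i$ is exactly cancelled by the $\sum_{i\in\cB}(-p_i)$ term I dropped in step three. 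Redoing step three without dropping: $\sum_i(-p_i) = \sum_{j\in\cS}(-p_j) - \sum_{i\in\cB} p_i = \sum_{j\in\cS} u_j + \sum_{j\in\cS} v_j|x_j| - \sum_{i\in\cB} p_i \le \sum_{j\in\cS} u_j + \sum_{i\in\cB} v_i x_i - \sum_{i\in\cB} p_i = \sum_{j\in\cS} u_j + \sum_{i\in\cB} u_i = \sum_i u_i$, which is exactly the claim. The one genuine obstacle, then, is cleanly establishing $q \ge v_j$ for all $j\in\cS$ and $v_i \ge q$ for all $i\in\cB$ from the definitions of $k$ and $q$ and the allocation \eqref{eq:allo} — in particular handling agent $k+1$, who may be a buyer with $x_{k+1}\ge 0$ or a seller with $x_{k+1} < 0$ depending on the sign of $\sum_{i=1}^k B_i/q - \sum_{i=k+2}^n\Gamma_i$, and checking that in either case its valuation $v_{k+1}$ sits on the correct side of $q$; this is a short case analysis on the two branches of the definition of $q$.
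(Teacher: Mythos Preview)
Your argument is correct, but you have missed a one-line simplification that the paper exploits. Since $u_i = v_i x_i - p_i$, the desired inequality $\sum_i (-p_i) \le \sum_i u_i$ is \emph{algebraically equivalent} to $\sum_i v_i x_i \ge 0$: the $p_i$ terms cancel from both sides. The paper then observes that $\sum_i x_i = 0$ and that every agent with $x_i > 0$ has valuation at least as large as every agent with $x_i < 0$ (by the sorting $v_1 \ge \cdots \ge v_n$ and the form of the allocation in \eqref{eq:allo}, positive $x_i$ occurs only for indices $\le k+1$ and negative $x_i$ only for indices $\ge k+1$). Hence $\sum_i v_i x_i \ge v_{k+1}\sum_i x_i = 0$, and the proof is complete.

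Your decomposition $-p_j = u_j + v_j|x_j|$ for sellers, followed by the bound $\sum_{j\in\cS} v_j|x_j| \le \sum_{i\in\cB} v_i x_i$, is exactly the inequality $\sum_i v_i x_i \ge 0$ rewritten after splitting into $\cB$ and $\cS$; you simply did not notice the cancellation that makes the buyer/seller split and the $p_i$ bookkeeping unnecessary. The ``genuine obstacle'' you flag at the end --- comparing valuations to $q$ and case-splitting on agent $k+1$ --- is also avoidable: you do not need $q$ as a separating threshold, since the index ordering already gives $v_i \ge v_{k+1} \ge v_j$ for any $i\in\cB$, $j\in\cS$, regardless of which side agent $k+1$ falls on.
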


\subsection{Approximation Ratio}

Finally, we analyze the approximation ratio. 

\begin{lemma}\label{thm:2approx}
  The \mechs is $1/2$-approximation.
\end{lemma}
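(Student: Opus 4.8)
The plan is to lean on the fact, already established in~\cref{sec:market_outcome_mechs}, that under \mechs the reachable market state is unique and equals $\vx=\{x_i(v_i)\}_{i\in[n]}$ as given by~\cref{eq:allo}. Hence the market liquid welfare returned by the mechanism is $\mlw(\vx)$, and it suffices to prove $\mlw(\vx)\ge\opt/2$. Everything then reduces to putting $\mlw(\vx)$ in closed form and charging $\opt$ against it in two blocks.

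The first step is to evaluate $\mlw(\vx)$. Since $q\le v_k$, every agent $i\le k$ has $v_i\ge q$, so $v_ix_i(v_i)=v_iB_i/q\ge B_i$ and her contribution is $v_i\Gamma_i+B_i$; every agent $i\ge k+2$ sells all of $\Gamma_i$ and contributes $v_i\Gamma_i-v_i\Gamma_i=0$; and a short case analysis on the two branches defining $q$ — where the maximality of $k$ forces $v_{k+1}x_{k+1}<B_{k+1}$ in the second branch — shows the pivot agent $k+1$ contributes exactly $v_{k+1}(\Gamma_{k+1}+x_{k+1})\ge0$, for either sign of $x_{k+1}$. Summing gives the identity
\[
  \mlw(\vx)=\sum_{i\le k}v_i\Gamma_i\;+\;q\sum_{i>k}\Gamma_i .
\]
The same case analysis yields two facts I will reuse: $\sum_{i\le k}B_i\le q\sum_{i>k}\Gamma_i$ (an equality in the first branch of $q$, the defining inequality in the second), and $q\ge v_{k+1}\ge v_i$ for every $i>k$.

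The second step bounds $\opt$ using~\cref{lem:optallo}. In the optimal distribution $\vx^*$ the agents $\{k^*+2,\dots,n\}$ sell everything and contribute $0$ to $\mlw$, so $\opt=\sum_{i=1}^{k^*+1}\left(v_i\Gamma_i+\min\{v_ix_i^*,B_i\}\right)$. A preliminary observation is $k\le k^*$: if $\sum_{i\le l}B_i\le v_l\sum_{i>l}\Gamma_i$ then, using $v_i\ge v_l$ for $i\le l$, also $\sum_{i\le l}B_i/v_i\le\sum_{i\le l}B_i/v_l\le\sum_{i>l}\Gamma_i$, so any index feasible for $k$ is feasible for $k^*$. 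Now split the sum at $k$. For $i\le k$, $\min\{v_ix_i^*,B_i\}\le B_i$, so the first block is at most $\sum_{i\le k}(v_i\Gamma_i+B_i)\le\sum_{i\le k}v_i\Gamma_i+q\sum_{i>k}\Gamma_i=\mlw(\vx)$. For $i\ge k+1$ we have $v_i\le v_{k+1}\le q$ and $\Gamma_i+x_i^*\ge0$ (since $x_i^*=B_i/v_i\ge0$ for $i\le k^*$, and $\Gamma_{k^*+1}+x_{k^*+1}^*\ge0$ by $k^*$-feasibility), so the second block is at most $\sum_{i=k+1}^{k^*+1}v_i(\Gamma_i+x_i^*)\le q\sum_{i=k+1}^{k^*+1}(\Gamma_i+x_i^*)$; and self-consistency of $\vx^*$ gives $\sum_{i=1}^{k^*+1}x_i^*=\sum_{i\ge k^*+2}\Gamma_i$, whence $\sum_{i=k+1}^{k^*+1}(\Gamma_i+x_i^*)=\sum_{i>k}\Gamma_i-\sum_{i\le k}x_i^*\le\sum_{i>k}\Gamma_i$ (using $x_i^*\ge0$ for $i\le k$), so the second block is also at most $q\sum_{i>k}\Gamma_i\le\mlw(\vx)$. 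Adding the two blocks gives $\opt\le2\,\mlw(\vx)$, which proves the lemma.

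I expect the only real obstacle to be the bookkeeping in the closed-form evaluation of $\mlw(\vx)$ — in particular tracking the pivot agent $k+1$ across the two branches of $q$ and across the sign of $x_{k+1}$, and checking that the two inequalities harvested there ($\sum_{i\le k}B_i\le q\sum_{i>k}\Gamma_i$ and $q\ge v_{k+1}$) are exactly what the two-block decomposition of $\opt$ consumes. Degenerate situations — $k=n$ (which happens only when all budgets vanish, so the ratio is trivially $1$), some $\Gamma_i=0$, or ties in the valuations at the boundary defining $k$ or $k^*$ — each need a remark but leave the argument intact.
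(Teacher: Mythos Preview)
Your proof is correct and follows essentially the same approach as the paper: split $\opt$ at the mechanism's partition point $k$ into a top block bounded by $\sum_{i\le k}(v_i\Gamma_i+B_i)$ and a bottom block bounded by $v_{k+1}\sum_{i>k}\Gamma_i$ (your $q\sum_{i>k}\Gamma_i$), then show each block is at most $\mlw(\vx)$. Your closed-form identity $\mlw(\vx)=\sum_{i\le k}v_i\Gamma_i+q\sum_{i>k}\Gamma_i$ cleanly unifies the paper's two-case analysis on $q$, and your explicit verification that $k\le k^*$ makes precise a step the paper leaves to the reader, but the underlying decomposition and charging are the same.
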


\begin{proof}
    Suppose the indices of agents are ordered in descending order of valuations, i.e., $v_1 \geq v_2 \geq \cdots \geq v_n$. Let $\vx = (x_1, x_2, \cdots, x_n)$ represent the distribution obtained from \mechs. According to the analysis in \cref{sec:market_outcome_mechs}, we have $x_i = x_i(v_i)$.
For an optimal resource distribution $\vx^*=(x_1^*,x_2^*,\cdots,x_n^*)$, denote the optimal welfare by $\opt$, i.e. $\opt=\mlw(\vx^*)$.
Therefore, we have:
  \begin{eqnarray}
\opt &=& \sum_{i=1}^n v_i \Gamma_i + \sum_{i=1}^n \min\{v_i x_i^*, B_i\}
      \le \sum_{i=1}^k (B_i + v_i \Gamma_i) + \sum_{i=k+1}^n v_i(\Gamma_i+x_i^*)\nonumber\\
      &\le& \sum_{i=1}^k (B_i + v_i \Gamma_i) + v_{k+1} \cdot \sum_{i=k+1}^n \Gamma_i,\label{eqn:OPT's upper bound}
  \end{eqnarray}
where $k$ is defined in \mechs, and the first inequality is due to then property of the $\min$ function, the second inequality follows from \cref{lem:optallo}. From the mechanism, 
  \begin{equation}\label{eqn:GLW's Lower bound1}
  \begin{aligned}
      \mlw(\vx)&= \sum_{i=1}^n \min\{v_i x_i,B_i\} + \sum_{i=1}^n v_i \Gamma_i \\
      & = \sum_{i=1}^k (B_i+v_i \Gamma_i) + v_{k+1} ( x_{k+1} + \Gamma_{k+1} )\geq \sum_{i=1}^k (B_i+v_i \Gamma_i),
  \end{aligned}
  \end{equation}
  where the last inequality is from part (2) of \cref{lem:several_properties}.

If $\sum_{i=1}^k B_i>v_{k+1}\cdot\sum_{i=k+1}^n \Gamma_i$, then $x_{k+1} = -\Gamma_{k+1}$. Thus,
  \begin{eqnarray}
      \mlw(\vx) = \sum_{i=1}^k (B_i+v_i \Gamma_i) > v_{k+1}\sum_{i=k+1}^n \Gamma_i + \sum_{i=1}^k v_i \Gamma_i \ge v_{k+1}\sum_{i=k+1}^n \Gamma_i.\label{eqn:GLW's Lower bound2}
  \end{eqnarray}

If $\sum_{i=1}^k B_i\le v_{k+1}\sum_{i=k+1}^n \Gamma_i$, then $x_{k+1} = \sum_{i=k+2}^n \Gamma_i - \sum_{i=1}^k B_i/v_{k+1}$. 
  \begin{eqnarray}
  \mlw(\vx) &=& \sum_{i=1}^k B_i + v_{k+1}  \left( \sum_{i=k+2}^n \Gamma_i - \sum_{i=1}^k B_i/v_{k+1} \right) + \sum_{i=1}^{k+1} v_i \Gamma_i \nonumber\\
      &=&
    v_{k+1}\sum_{i=k+1}^n \Gamma_i + \sum_{i=1}^k v_i \Gamma_i 
      \ge v_{k+1} \sum_{i=k+1}^n \Gamma_i.\label{eqn:GLW's Lower bound3}  
  \end{eqnarray}
Therefore, by (\ref{eqn:OPT's upper bound}), (\ref{eqn:GLW's Lower bound1}), (\ref{eqn:GLW's Lower bound2}) and (\ref{eqn:GLW's Lower bound3}), we have
\begin{eqnarray*}
    \opt\leq \sum_{i=1}^k(B_i+v_i\Gamma_i)+v_{k+1}\sum_{i=k+1}^n\Gamma_i\leq 2\mlw(\vx).
\end{eqnarray*}
Consequently, this result holds.
\end{proof}

Combining \cref{thm:truthfulness}, \cref{lem:boundC} and \cref{thm:2approx} proves \cref{thm:mechs} directly.

\section{Conclusion}\label{sec:con}
This paper introduces a novel and practical exchange market model. Within this model, we prove several properties related to market equilibrium and optimal structure, and present two mechanisms with theoretical guarantees.
Notice that our mechanism for general markets is unprofitable. An open question we pose is whether a truthful, profitable, and constant-approximate mechanism exists in general markets. Intuitively, we lean towards answering ``no''. Consider a scenario with one buyer and one seller --- it appears challenging to set a uniform price that simultaneously satisfies truthfulness and constant approximation. However, rigorously proving this claim requires new analytical techniques.

\newpage
\bibliographystyle{splncs04}
\bibliography{mybibliography}

\begin{thebibliography}{10}
\providecommand{\url}[1]{\texttt{#1}}
\providecommand{\urlprefix}{URL }
\providecommand{\doi}[1]{https://doi.org/#1}

\bibitem{anari2014mechanism}
Anari, N., Goel, G., Nikzad, A.: Mechanism design for crowdsourcing: An optimal
  1-1/e competitive budget-feasible mechanism for large markets. In: 2014 IEEE
  55th Annual Symposium on Foundations of Computer Science. pp. 266--275. IEEE
  (2014)

\bibitem{arrow1959stability}
Arrow, K.J., Block, H.D., Hurwicz, L.: On the stability of the competitive
  equilibrium, ii. Econometrica: Journal of the Econometric Society pp. 82--109
  (1959)

\bibitem{arrow1954existence}
Arrow, K.J., Debreu, G.: Existence of an equilibrium for a competitive economy.
  Econometrica: Journal of the Econometric Society pp. 265--290 (1954)

\bibitem{arrow1958stability}
Arrow, K.J., Hurwicz, L.: On the stability of the competitive equilibrium, i.
  Econometrica: Journal of the Econometric Society pp. 522--552 (1958)

\bibitem{budish2012matching}
Budish, E.: Matching" versus" mechanism design. ACM SIGecom Exchanges
  \textbf{11}(2),  4--15 (2012)

\bibitem{chen2022absnft}
Chen, H., Cheng, Y., Deng, X., Huang, W., Rong, L.: Absnft: securitization and
  repurchase scheme for non-fungible tokens based on game theoretical analysis.
  In: International Conference on Financial Cryptography and Data Security. pp.
  407--425. Springer (2022)

\bibitem{chen2011mechanism}
Chen, N., Gravin, N., Lu, P.: Mechanism design without money via stable
  matching. arXiv preprint arXiv:1104.2872  (2011)

\bibitem{chen2016efficient}
Chen, X., Hu, X., Liu, T.Y., Ma, W., Qin, T., Tang, P., Wang, C., Zheng, B.:
  Efficient mechanism design for online scheduling. Journal of Artificial
  Intelligence Research  \textbf{56},  429--461 (2016)

\bibitem{cheng2022tight}
Cheng, Y., Deng, X., Li, Y., Yan, X.: Tight incentive analysis on sybil attacks
  to market equilibrium of resource exchange over general networks. In:
  Proceedings of the 23rd ACM Conference on Economics and Computation. pp.
  792--793 (2022)

\bibitem{cheng2023truthfulness}
Cheng, Y., Deng, X., Qi, Q., Yan, X.: Truthfulness of a network
  resource-sharing protocol. Mathematics of Operations Research
  \textbf{48}(3),  1522--1552 (2023)

\bibitem{christodoulou2009mechanism}
Christodoulou, G., Koutsoupias, E.: Mechanism design for scheduling. Bull.
  EATCS  \textbf{97},  40--59 (2009)

\bibitem{clarke1971multipart}
Clarke, E.H.: Multipart pricing of public goods. Public choice pp. 17--33
  (1971)

\bibitem{deng2002complexity}
Deng, X., Papadimitriou, C., Safra, S.: On the complexity of equilibria. In:
  Proceedings of the thiry-fourth annual ACM symposium on Theory of computing.
  pp. 67--71 (2002)

\bibitem{dobzinski2012multi}
Dobzinski, S., Lavi, R., Nisan, N.: Multi-unit auctions with budget limits.
  Games and Economic Behavior  \textbf{74}(2),  486--503 (2012)

\bibitem{dobzinski2014efficiency}
Dobzinski, S., Leme, R.P.: Efficiency guarantees in auctions with budgets. In:
  International Colloquium on Automata, Languages, and Programming. pp.
  392--404. Springer (2014)

\bibitem{duan2016improved}
Duan, R., Garg, J., Mehlhorn, K.: An improved combinatorial polynomial
  algorithm for the linear arrow-debreu market. In: Proceedings of the
  twenty-seventh annual ACM-SIAM symposium on discrete algorithms. pp. 90--106.
  SIAM (2016)

\bibitem{fiat2011single}
Fiat, A., Leonardi, S., Saia, J., Sankowski, P.: Single valued combinatorial
  auctions with budgets. In: Proceedings of the 12th ACM conference on
  Electronic commerce. pp. 223--232 (2011)

\bibitem{georgiadis2015exchange}
Georgiadis, L., Iosifidis, G., Tassiulas, L.: Exchange of services in networks:
  competition, cooperation, and fairness. In: Proceedings of the 2015 ACM
  SIGMETRICS International Conference on Measurement and Modeling of Computer
  Systems. pp. 43--56 (2015)

\bibitem{groves1973incentives}
Groves, T.: Incentives in teams. Econometrica: Journal of the Econometric
  Society pp. 617--631 (1973)

\bibitem{heydenreich2007games}
Heydenreich, B., M{\"u}ller, R., Uetz, M.: Games and mechanism design in
  machine scheduling—an introduction. Production and operations management
  \textbf{16}(4),  437--454 (2007)

\bibitem{hurwicz1986informational}
Hurwicz, L.: On informational decentralization and efficiency of resource
  allocation mechanisms  (1986)

\bibitem{jackson1992incentive}
Jackson, M.O.: Incentive compatibility and competitive allocations. Economics
  Letters  \textbf{40}(3),  299--302 (1992)

\bibitem{jain2007polynomial}
Jain, K.: A polynomial time algorithm for computing an arrow--debreu market
  equilibrium for linear utilities. SIAM Journal on Computing  \textbf{37}(1),
  303--318 (2007)

\bibitem{lu2015improved}
Lu, P., Xiao, T.: Improved efficiency guarantees in auctions with budgets. In:
  Proceedings of the Sixteenth ACM Conference on Economics and Computation. pp.
  397--413 (2015)

\bibitem{lu2017liquid}
Lu, P., Xiao, T.: Liquid welfare maximization in auctions with multiple items.
  In: Algorithmic Game Theory: 10th International Symposium, SAGT 2017,
  L’Aquila, Italy, September 12--14, 2017, Proceedings 10. pp. 41--52.
  Springer (2017)

\bibitem{misra2010incentivizing}
Misra, V., Ioannidis, S., Chaintreau, A., Massouli{\'e}, L.: Incentivizing
  peer-assisted services: A fluid shapley value approach. ACM SIGMETRICS
  Performance Evaluation Review  \textbf{38}(1),  215--226 (2010)

\bibitem{myerson1981optimal}
Myerson, R.B.: Optimal auction design. Mathematics of operations research
  \textbf{6}(1),  58--73 (1981)

\bibitem{pigou2017economics}
Pigou, A.: The economics of welfare. Routledge (2017)

\bibitem{roberts1976incentives}
Roberts, D.J., Postlewaite, A.: The incentives for price-taking behavior in
  large exchange economies. Econometrica: journal of the Econometric Society
  pp. 115--127 (1976)

\bibitem{vickrey1961counterspeculation}
Vickrey, W.: Counterspeculation, auctions, and competitive sealed tenders. The
  Journal of finance  \textbf{16}(1),  8--37 (1961)

\bibitem{wu2007proportional}
Wu, F., Zhang, L.: Proportional response dynamics leads to market equilibrium.
  In: Proceedings of the thirty-ninth annual ACM symposium on Theory of
  computing. pp. 354--363 (2007)

\bibitem{ye2008path}
Ye, Y.: A path to the arrow--debreu competitive market equilibrium.
  Mathematical Programming  \textbf{111}(1),  315--348 (2008)

\end{thebibliography}

\appendix
\newpage

\section{Omitted Proofs in~\cref{sec:mop}}\label{sec:mop_proof}

\subsection{Proof of~\cref{lem:equilibrium_property}}\label{sec:leem:eq_pro}

The lemma can be proved directly by the reachability definition (\cref{def:reach}). As $\sum_{i\in \cB} \min\left\{r_i, \frac{B_i}{\lambda_i}\right\} = \sum_{j\in \cS} \min\left\{l_j, \Gamma_j\right\}$, the market's supply and demand are equal. Therefore, when the market reaches equilibrium, all buyers reach their upper bounds for purchases, and all sellers reach their upper bounds for sales, imply a unique reachable state and the welfare claimed in the lemma.

\subsection{Proof of~\cref{lem:optallo}}\label{sec:lem:optallo}

Denote by $\vx^*$ the resource distribution defined in the lemma.
Considering any other feasible $\vx' =\{x_i'\}_{i\in [n]}$, we show that $\mlw(\vx')\leq \mlw(\vx^*)$.

Without loss of generality, we can assume that for any agent $i$, $x_i'$ is at most $B_i/v_i$ because, beyond this upper bound, the agent's contribution to the market liquid welfare will not increase further. Then for each agent $i\leq [k^*]$, we have $x_i'\leq \frac{B_i}{v_i}=x_i^*$.
According to the construction of $\vx^*$, we also have for any agent $i\geq k^*+2$, $ x_i' \geq -\Gamma_i = x_i^*$.

Combining the above two inequalities and the fact that $x_{k^*+1}' =-\sum_{i=1}^{k^*} x_i' 
-\sum_{i=k^*+2}^{n} x_i' $, then 
\begin{align*}
      \mlw(\vx')-\sum_{i\in [n]} v_i\cdot \Gamma_i 
      = & \sum_{i=1}^{k^*} \min\{v_i \cdot x_i',B_i\} + \min\{v_{k^*+1} \cdot x_{k^*+1}',B_{k^*+1}\} + \sum_{i=k^*+2}^{n} \min\{v_i\cdot x_i',B_i\}\\
    \leq & \sum_{i=1}^{k^*} v_i\cdot x_i' - v_{k^*+1} \cdot \left(\sum_{i=1}^{k^*} x_i' + \sum_{i=k^*+2}^{n} x_i'\right)  +
    \sum_{i=k^*+2}^{n} v_i \cdot x_i'\\
    =&\sum_{i=1}^{k^*} (v_i-v_{k^*+1})x_i' -
    \sum_{i=k^*+2}^{n} (v_{k^*+1}-v_i) x_i'\\
    \leq & \sum_{i=1}^{k^*} (v_i-v_{k^*+1})x_i^* -
    \sum_{i=k^*+2}^{n} (v_{k^*+1}-v_i) x_i^*\\
    =& \mlw(\vx^*) - \sum_{i\in [n]} v_i\cdot \Gamma_i.
\end{align*}

\section{Omitted Proofs in~\cref{sec:mp}}\label{sec:mp_proof}

\subsection{Proof of~\cref{lem:map_lower_bound}}\label{sec:lem:map_lower_bound}

The claimed instance consists of three agents as follows:
\begin{itemize}[left=3em]
    \item $v_1= \frac{1/2 +\epsilon}{2\epsilon}$, $B_1 = 1$, $\Gamma_1 = 0$
    \item $v_2= 1$, $B_2 = 1$, $\Gamma_2 = 0$
    \item $v_3 = 0$, $B_3=0$, $\Gamma_3 = 1$.
\end{itemize}

Clearly, the optimal market liquid welfare is $1/(1/2+\epsilon)$. However, for any exchange constraint $([-\infty,\infty], \lambda)$, the market liquid welfare of the worst reachable state is $1$. If $\lambda > 1$, only the first agent can buy; while if $\lambda\leq 1$, both agent 1 and agent 2 can buy, but in the worst case, the second agent buys all resources and the market liquid welfare is still 1. Thus, the approximation ratio is at most $1/2+\epsilon$.

\subsection{Proof of~\cref{lem:uniform_large_truthful}}\label{sec:lem:uniform_large_truthful}

The proof is straightforward. In any scenario, agents in $L$ are unable to engage in trades, while the reported valuations of agents in $R$ do not affect the uniform price $\lambda$. As a result, no one has an incentive to lie. Then due to the uniform price property of the algorithm, the market is always profitable.

\subsection{Proof of~\cref{lem:uniform_large_ratio}}\label{sec:lem:uniform_large_ratio}

    Consider the optimal resource distribution $\vx^*$ constructed by~\cref{lem:optallo}.
    Let $\cB^*$ and $\cS^*$ be the corresponding buyers and sellers.
    Define $\mlw(\vx^*_L)$ (resp. $\mlw(\vx^*_R)$) as the total contribution of agents in $L$ (resp. $R$) to $\opt$. The basic idea of this proof is to show that $\opt(L)/\beta$, $\mlw(\vx^*_L)/\beta$, and $\opt$ are close to each other with high probability. As a result, the computed $\lambda$ by the algorithm approaches the approximate market price stated in~\cref{lem:map}, and a constant approximation can be achieved. Since agents in $L$ are sampled independently with a probability of $\beta$, we have 
   $  \E[\mlw(\vx^*_L)] = \beta \cdot \opt. $
   
    Using the Chernoff bound and the large market assumption, we have for any $\delta \in (0,1)$, 
    \[\Pr\left[ | \mlw(\vx^*_L) - \beta \cdot \opt | \geq \delta \cdot \beta \cdot \opt \right] \leq 2\exp\left(-\frac{c \delta^2}{3}\right).\]
    Similarly, we have
    \[\Pr\left[ | \vx_{L\cap \cB^*}^* - \beta \cdot \vx_{\cB^*}^* | \geq \delta \cdot \beta \cdot \vx_{\cB^*}^* \right] \leq 2\exp\left(-\frac{c\delta^2}{3}\right),\]
    and
    \[\Pr\left[ | \vx_{L\cap \cS^*}^* - \beta \cdot \vx_{\cS^*}^* | \geq \delta \cdot \beta \cdot |\vx_{\cS^*}^*| \right] \leq 2\exp\left(-\frac{c\delta^2}{3}\right).\]
    where for simplicity, we slightly abuse the notation by letting $\vx_{Q}^*$ for an agent subset $Q$ represent both the set $\{x_i^*\}_{i\in Q}$ and the sum of the elements in $Q$.

By the union bound, we have with probability at least $1-6\exp(-c\delta^2/3)$, the following three properties hold simultaneously:
\begin{enumerate}[left=3em]
    \item[(P1)] $\mlw(\vx^*_L) \in [(1-\delta)\beta\cdot \opt, (1+\delta)\beta \cdot \opt ] $
    \vspace{0.5em}
    \item[(P2)] $\vx_{L\cap \cB^*}^* \in [ (1-\delta)\beta\cdot \vx^*_{\cB^*}, (1+\delta)\beta\cdot \vx^*_{\cB^*} ]$
    \vspace{0.5em}
    \item[(P3)] $\vx_{L\cap \cS^*}^* \in [ (1-\delta)\beta\cdot |\vx^*_{\cS^*}|, (1+\delta)\beta\cdot |\vx^*_{\cS^*}| ]$
\end{enumerate}    

Now we are ready to establish the relationship between $\opt(L)$ and $\mlw(\vx_L^*)$. Note that $\vx_L^*$ could be not self-consistent, i.e., $\vx_L^*\neq 0$. Fortunately, due to the fact that $\vx_{\cB^*}^* = |\vx_{\cS^*}^*|$ and the last two properties above, we have
\[ \frac{1-\delta}{1+\delta} \leq \frac{\vx_{L\cap \cB^*}^*}{|\vx_{L\cap \cS^*}^*|} \leq \frac{1+\delta}{1-\delta}. \]
Therefore, by decreasing each term in $\vx_{L\cap \cB^*}$ at most a factor of $\frac{1+\delta}{1-\delta}$, $\vx_{L}^*$ will become self-consistent. Observing that the objective also decreases at most $\frac{1+\delta}{1-\delta}$ factor, we have
\begin{equation}\label{eq:large:L1}
  \opt(L) \geq \frac{1-\delta}{1+\delta} \cdot \mlw(\vx_L^*).  
\end{equation}
On the other hand, if multiplying each term in $\vx_{L\cap \cS^*}^*$ and $\{\Gamma_i\}_{i\in L\cap \cS^*}$ by at most a factor of $\frac{1+\delta}{1-\delta}$, we can obtain a new instance with a larger optimal objective and $\vx_{L}^*$ becomes not only a self-consistent, but also an optimal resource distribution on it according to~\cref{lem:optallo}. Observing that the objective increases at most $\frac{1+\delta}{1-\delta}$ during this process, we have
\begin{equation}\label{eq:large:L2}
     \opt(L) \leq \frac{1+\delta}{1-\delta} \cdot \mlw(\vx_L^*).
\end{equation}

Combing~\cref{eq:large:L1} and~\cref{eq:large:L2} and the property $(P1)$, the uniform price $\lambda$ in our algorithm can be bounded:
\begin{equation}\label{eq:large:lam}
    \lambda = \frac{1-\beta}{\beta } \cdot \frac{\opt(L)}{2 \cdot \vgam_{R}} \in \left[ (1-\beta) \cdot \frac{(1-\delta)^2}{1+\delta}\cdot \frac{\opt}{2 \vgam_R}, (1-\beta)\cdot \frac{(1+\delta)^2}{1-\delta} \cdot \frac{\opt}{2 \vgam_R} \right]
\end{equation}

Observe that $ \mlw(\vx_{R}^*) = \opt -  \mlw(\vx_{L}^*) $, $\vx_{R\cap \cB^*}^* =\vx_{\cB^*}^* -\vx_{L\cap \cB^*}^*  $ and  $\vx_{R\cap \cS^*}^* =\vx_{\cS^*}^* -\vx_{L\cap \cS^*}^* $. By symmetry, we have
\[ \frac{\opt(R)}{\opt} \in \left[ \frac{(1-\delta)^2}{1+\delta} \cdot (1-\beta),  \frac{(1+\delta)^2}{1-\delta} \cdot (1-\beta)\right]. \]

Combing the above with~\cref{eq:large:lam}, we have
\[ \lambda \in \left[ \frac{(1-\delta)^3}{(1+\delta)^3} \cdot \frac{\opt(R)}{2\Gamma_R}, \frac{(1+\delta)^3}{(1-\delta)^3} \cdot \frac{\opt(R)}{2\Gamma_R}  \right] \]

Let $(\vx,\vp)$ be an arbitrary reachable state of our algorithm. Similar to the two-case analysis in~\cref{lem:map}, we have for a small constant $\delta$,
\begin{align*}
    \mlw(\vx) &\geq \left(1- \frac{(1+\delta)^3}{2(1-\delta)^3} \right) \cdot \opt(R) 
     \geq (1-\beta) \cdot \frac{(1-\delta)^2}{1+\delta} \cdot \left(1- \frac{(1+\delta)^3}{2(1-\delta)^3} \right) \cdot \opt \\
    & = \frac{1-\beta}{2}\cdot (1-O(\delta)) \cdot \opt.
\end{align*}

\section{\MECHL for the Multiple-Parameter Setting}\label{sec:app-mp}

The goal of this section is to design a truthful market mechanism in a multi-parameter environment, where the valuation, budget, and initial resource amount are all private information of the agents.
However, in a multi-parameter environment, directly setting the uniform price as $\frac{1-\beta}{\beta}\cdot\frac{\opt(L)}{2\cdot \sum_{i\in R}\Gamma_i}$ will result in the mechanism not being truthful.
For instance, agents with high valuations could overreport their initial resources $\Gamma_i$ to secure a lower uniform price and purchase resources at a lower cost when allocated to $R$.
Therefore, a natural idea is to estimate the resources in $R$ using the resources in $L$. 
Specifically, we set the uniform price as $\frac{\opt(L)}{2\cdot \sum_{i \in L} \Gamma_i}$.
The detailed mechanism is outlined in \cref{alg:large_multiple}.

\begin{algorithm}[H]
  \caption{\MECHL for the Multiple-Parameter Setting}\label{alg:large_multiple}
  \begin{algorithmic}[1]
  \Require Agent set $\vA=\{(v_i, B_i, \Gamma_i)\}_{i\in [n]}$ and parameter $\beta\in (0,1/2)$.
  \Ensure Exchange constraints $(\vI,\vlam)$  
  \State Independently sample each agent with a probability of $\beta$. Denote the agents that are sampled as $L$ and the remaining agents as $R$.
  \State Compute the optimal market liquid welfare $\opt(L)$ of the agent set $L$.
  \State Set a uniform price $\lambda \gets \frac{\opt(L)}{2\cdot \vgam_L}$, where $\vgam_L = \sum_{i\in L}\Gamma_i$ is the total initial resource units of agent set $L$.
  \State Set $I_i\gets [0,0]$ $\forall i\in L$ and $I_i \gets [-\infty,\infty]$ $\forall i\in R$. 
  \State \Return $\{(I_i, \lambda)\}_{i\in [n]}$
  \end{algorithmic}
\end{algorithm}

Under this mechanism, to achieve a good approximation ratio, we need to slightly modify \cref{def:large_market} by adding another constraint on $\theta$:
\begin{itemize}[left=3em]
    \item[\textit{(4)}] \textit{No monopolistic resource holders: $\forall i \in [n], \Gamma_i \le \theta \cdot \left( \sum_{j \in [n]} \Gamma_j \right)$.}
\end{itemize}
The new constraint in our assumption is introduced to prevent significant deviations of resource quantities on sides $L$ and $R$ from their expected values during sampling. Such deviations could result in inaccurate estimations of the resource quantities on side $R$ based on those on side $L$.

Similar to \cref{alg:uniform_large}, \cref{alg:large_multiple} satisfies the following theorem. The proof is omitted due to its similarity to the proof of \cref{thm:large_market}. %

\begin{theorem}\label{thm:uniform_large_mp}
    \cref{alg:large_multiple} is universally truthful and profitable. 
    Under the modified large market assumption that $\theta = \beta/ c$ (a large constant $c \gg 1$), \cref{alg:large_multiple} obtains a $\frac{1-\beta}{2}\cdot (1-O(\delta))$ approximation with probability at least $1-8\exp(-c\delta^2/3)$, where $\delta$ is a small constant compared to $c$. Furthermore, when the parameter $\theta \rightarrow 0$, the approximation ratio approaches $1/2$.
\end{theorem}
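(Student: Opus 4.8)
\textbf{Proof proposal for~\cref{thm:uniform_large_mp}.}
The plan is to mirror the three-part argument used for~\cref{thm:large_market} (i.e.\ the combination of~\cref{lem:uniform_large_truthful} and~\cref{lem:uniform_large_ratio}), carefully tracking where the multi-parameter setting and the change of price formula from $\frac{1-\beta}{\beta}\cdot\frac{\opt(L)}{2\vgam_R}$ to $\frac{\opt(L)}{2\vgam_L}$ force modifications. For \emph{truthfulness and profitability} the argument is essentially unchanged: agents in $L$ have exchange interval $[0,0]$ and cannot trade, so their reports are irrelevant to their own (zero) utility; agents in $R$ trade at a uniform price $\lambda$ that depends only on the reported data of agents landing in $L$, hence no agent in $R$ can influence $\lambda$ by misreporting $v_i$, $B_i$, or $\Gamma_i$. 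Crucially, because $\lambda$ is now computed as $\opt(L)/(2\vgam_L)$ rather than using $\vgam_R$, an agent who lands in $R$ cannot lower the price by inflating her own $\Gamma_i$ — this is exactly the point of the new formula and I would state it explicitly. Universal truthfulness over the internal randomness follows since the above holds conditioned on any realized partition $(L,R)$. Profitability is immediate from the uniform-price property: every transaction has matching buy/sell prices, so $\sum_i p_i = 0 \ge 0$.

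For the \emph{approximation ratio}, I would reuse the skeleton of~\cref{lem:uniform_large_ratio}. Let $\vx^*$ be the optimal distribution from~\cref{lem:optallo}, with buyer set $\cB^*$ and seller set $\cS^*$, and let $\mlw(\vx_L^*)$, $\mlw(\vx_R^*)$ denote the contributions of the sampled and unsampled agents. By independent $\beta$-sampling, $\E[\mlw(\vx_L^*)] = \beta\cdot\opt$, $\E[\vx^*_{L\cap\cB^*}] = \beta\cdot\vx^*_{\cB^*}$, $\E[\vx^*_{L\cap\cS^*}] = \beta\cdot|\vx^*_{\cS^*}|$, and now \emph{additionally} $\E[\vgam_L] = \beta\cdot\vgam$. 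Applying Chernoff to each of these four quantities and using the large-market conditions (1)–(3) together with the \emph{new} condition (4) ($\Gamma_i \le \theta\cdot\vgam$ for all $i$, with $\theta = \beta/c$), a union bound over the four events gives that with probability at least $1 - 8\exp(-c\delta^2/3)$ all four concentrate within a $(1\pm\delta)$ factor of their means. On this good event, the arguments of~\cref{eq:large:L1} and~\cref{eq:large:L2} still bound $\opt(L)$ between $\frac{1-\delta}{1+\delta}\mlw(\vx_L^*)$ and $\frac{1+\delta}{1-\delta}\mlw(\vx_L^*)$, and $\mlw(\vx_L^*)\in[(1-\delta)\beta\opt,(1+\delta)\beta\opt]$; combining with $\vgam_L \in [(1-\delta)\beta\vgam,(1+\delta)\beta\vgam]$ yields
\[
\lambda \;=\; \frac{\opt(L)}{2\vgam_L} \;\in\; \left[\frac{(1-\delta)^2}{(1+\delta)^2}\cdot\frac{\opt}{2\vgam},\;\frac{(1+\delta)^2}{(1-\delta)^2}\cdot\frac{\opt}{2\vgam}\right].
\]
This places $\lambda$ within a $(1\pm O(\delta))$ factor of the ideal price $\opt/(2\vgam)$ from~\cref{lem:map}, and then the same two-case equilibrium analysis as in~\cref{lem:map} (no seller has remaining resources vs.\ no buyer is willing to buy), restricted to the fact that only agents in $R$ trade and contribute, gives $\mlw(\vx) \ge \frac{1-\beta}{2}\cdot(1-O(\delta))\cdot\opt$ for every reachable state. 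Letting $\theta\to 0$ (equivalently $c\to\infty$) and $\delta\to 0$ drives the ratio to $\tfrac12$.

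The subtle point — and the place I expect to spend the most care — is the interaction between $\vgam_L$ and $\vgam_R$ in the welfare accounting. In~\cref{alg:uniform_large} the price was tuned so that $\vgam_R\cdot\lambda \approx \frac{1-\beta}{2}\opt$ by construction; here $\lambda$ is tuned against $\vgam_L$, so I must argue $\vgam_R\cdot\lambda$ is still $\approx\frac{1-\beta}{2}\opt$, which needs $\vgam_R \approx (1-\beta)\vgam$ — a fourth concentration event, and precisely why assumption (4) is needed (without a bound on individual $\Gamma_i$, $\vgam_R$ could deviate wildly from its mean). I would make sure the Chernoff application for $\vgam_L$ (or equivalently $\vgam_R$) uses condition (4) to get the scale parameter $c = \beta/\theta$, so the failure probability matches the other three terms and the union bound produces the stated $8\exp(-c\delta^2/3)$. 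The remaining steps are routine adaptations of the single-parameter proof; no genuinely new inequality is required beyond threading condition (4) through the estimate of $\vgam_R$.
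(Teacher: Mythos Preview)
Your proposal is correct and is exactly the approach the paper has in mind: the paper does not write out a proof of~\cref{thm:uniform_large_mp} at all, stating only that it ``is omitted due to its similarity to the proof of~\cref{thm:large_market},'' and your plan reconstructs precisely that similarity --- add a fourth Chernoff event for $\vgam_L$ (enabled by the new condition~(4)), which accounts for the change from $6$ to $8$ in the failure probability, and then rerun the two-case analysis of~\cref{lem:map} on the $R$-market. Your identification of why the price formula changed (so that an $R$-agent cannot depress $\lambda$ by inflating her reported $\Gamma_i$) and of where condition~(4) enters (to bound each $\Gamma_i$ so that Chernoff applies to $\vgam_L$, hence also to $\vgam_R = \vgam - \vgam_L$) are exactly the points the paper is relying on implicitly.
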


\section{Omitted Proofs in~\cref{sec:sp}}\label{sec:proof}

\subsection{Proof of~\cref{lem:monotonicity}}\label{sec:lem:monotonicity}
For simplicity, in the proof of this lemma, we use $x_i$ to denote $x_i(v_i)$.

  Consider agent $i$. When $i \le k$, an increase in $v_i$ does not change $q$ and $x_i = B_i/q$ remains constant. For $i > k+1$, even as $v_i$ increases but remains less than $v_{k+1}$, $x_i = -\Gamma_i$ remains constant. Thus, it is only necessary to examine the scenario where $i = k+1$.

  Therefore, it is only necessary to consider the case where $i = k+1$.
  Given the definition of $k$, it is clear that $\max\{v_{k+2},\sum_{i=1}^kB_i/\sum_{i=k+1}^n \Gamma_i\}\le\min\{v_k,\sum_{i=1}^{k+1}B_i/\sum_{i=k+2}^n \Gamma_i\}$.
  When \\ $v_{k+1} \le \max\{v_{k+2},\sum_{i=1}^kB_i/\sum_{i=k+1}^n \Gamma_i\}$, $x_i = -\Gamma_i$. 
  When $v_{k+1} \in (\max\{v_{k+2},\sum_{i=1}^kB_i/\sum_{i=k+1}^n \Gamma_i\},\\ \min\{v_k,\sum_{i=1}^{k+1}B_i/\sum_{i=k+2}^n \Gamma_i\})$, $x_i = \sum_{i=k+2}^n \Gamma_i-\sum_{i=1}^k B_i/v_{k+1}$, which is monotonically increasing in $v_{k+1}$.
  When $v_{k+1} = \min\{v_k, \sum_{i=1}^{k+1}B_i / \sum_{i=k+2}^n \Gamma_i\}$, it is necessary to discuss the following two cases.

  \textbf{Case 1:} $v_k \ge \sum_{i=1}^{k+1}B_i / \sum_{i=k+2}^n \Gamma_i$. In this case, $v_{k+1} = \sum_{i=1}^{k+1}B_i / \sum_{i=k+2}^n \Gamma_i$. Consequently,\\ $\sum_{i=1}^{k+1}B_i \le v_{k+1} \cdot \sum_{i=k+2}^n \Gamma_i$. By the definition of $k$, $\sum_{i=1}^{k+2}B_i > v_{k+2} \cdot \sum_{i=k+3}^n \Gamma_i$, thus $x_{k+1} = B_{k+1}/q$. Similarly, according to the definition of $k$, $\sum_{i=1}^{k+1}B_i > v_{k+2} \cdot \sum_{i=k+2}^n \Gamma_i$, hence $p = \sum_{i=1}^{k+1}B_i / \sum_{i=k+2}^n \Gamma_i$. Therefore, $x_{k+1} = (B_{k+1} / \sum_{i=1}^{k+1}B_i)\cdot \sum_{i=k+2}^n \Gamma_i$. Given that $(B_{k+1} / \sum_{i=1}^{k+1}B_i) \cdot \sum_{i=k+2}^n \Gamma_i = \sum_{i=k+2}^n \Gamma_i - \sum_{i=1}^k B_i / (\sum_{i=1}^{k+1}B_i / \sum_{i=k+2}^n \Gamma_i)$, it follows that in this case, the change in $x_{k+1}$ is continuous.

  \textbf{Case 2:} $v_k < \sum_{i=1}^{k+1}B_i / \sum_{i=k+2}^n \Gamma_i$. In this scenario, $v_{k+1} = v_k$. Without loss of generality, when $v_{k+1} = v_k$, the tie-breaking rule gives preference to agent $k+1$ over $v_k$. 
  Let's delve into two distinct scenarios: 
  In the first scenario: $\sum_{i=1}^{k-1}B_i + B_{k+1} > v_k \cdot (\Gamma_k + \sum_{i=k+2}^n \Gamma_i)$.
  According to the definition of $k$, we have $\sum_{i=1}^{k-1}B_i \le v_{k-1} \cdot \sum_{i=k}^n \Gamma_i$, and $\sum_{i=1}^{k-1}B_i \le \sum_{i=1}^{k}B_i \le v_k \cdot \sum_{i=k+1}^n \Gamma_i \le v_k \cdot \sum_{i=k}^n \Gamma_i$. Therefore, $q = v_k$ and $x_{k+1} = \sum_{i=k+2}^n \Gamma_i + \Gamma_k - \sum_{i=1}^{k-1}B_i / v_k > \sum_{i=k+2}^n \Gamma_i - \sum_{i=1}^{k}B_i / v_k$. This implies that $x_{k+1}$ exhibits an increase at the point where $v_{k+1} = v_k$.
  In the second scenario: $\sum_{i=1}^{k-1}B_i + B_{k+1} \le v_k \cdot (\Gamma_k + \sum_{i=k+2}^n \Gamma_i)$.
  Given that $v_k < \sum_{i=1}^{k+1}B_i / \sum_{i=k+2}^n \Gamma_i$, it follows that $\sum_{i=1}^{k+1} B_i > v_k \cdot \sum_{i=k+2}^n \Gamma_i$. Therefore, we find $q = v_k$ and $x_{k+1} = B_{k+1} / v_k > \sum_{i=k+2}^n \Gamma_i - \sum_{i=1}^{k}B_i / v_k$. As a result, $x_{k+1}$ exhibits an increase at the point where $v_{k+1} = v_k$.

\subsection{Proof of~\cref{lem:several_properties}}\label{sec:lem:several_properties}
Firstly, we prove part (1), $(v_i - \lambda_i) x_i(v_i) \ge 0$, i.e., $v_i x_i(v_i) - p_i(v_i) \ge 0$.
According to \cref{payment}, we have $v_i x_i(v_i) - p_i(v_i) = \int_0^{v_i} x(z, \vecv_{-i}) \dif z - h(\vecv_{-i})$, where $h(\vecv_{-i}) = \int_0^{\hat{\vecv}_{-i}} x(z, \vecv_{-i}) \dif z$. By the definition of $\hat{\vecv}_{-i}$ and monotonicity of $x_i(v_i)$, we have $h(\vecv_{-i}) = \min_v \int_0^v x(z, \vecv_{-i}) \dif z \le \int_0^{v_i} x(z, \vecv_{-i}) \dif z$. Therefore, $v_i x_i(v_i) - p_i(v_i) \ge 0$ is proven.

Secondly, we prove part (2), $\frac{B_i}{\lambda_i} \ge x_i(v_i)$ and $x_i(v_i) \ge -\Gamma_i$. According to the definition of $x_i(v_i)$ (\cref{eq:allo}), it is evident that $x_i(v_i) \ge -\Gamma_i$. Next, we prove $\frac{B_i}{\lambda_i} \ge x_i(v_i)$, which is equivalent to proving $B_i \ge \lambda_i x_i(v_i) = p_i(v_i)$.

According to \cref{lem:monotonicity}, $x_i(v_i)$ is a non-decreasing function.
For agent $i$ with $x_i(v_i)<0$, by the definition of $\hat{\vecv}_{-i}$, we have $v_i\leq \hat{\vecv}_{-i}$ and $x_i(z,{\bf v}_{-i})\leq 0$ for any $z\in [v_i, \hat{\vecv}_{-i}]$, implying that $p_i=v_ix_i-\int_{\hat{\vecv}_{-i}}^{v_i}x_i(z,{\bf v}_{-i})\dif z\leq 0\leq B_i$. For case of $x_i(v_i)=0$, it is evident that $p_i=0\leq B_i$. For the case of $x_i(v_i)>0$, it implies $i\in [k+1]$. If $i\in [k]$, then $\hat{\vecv}_{-i}\le q\le v_i$ and $x_i(z)=\frac{B_i}{q}$ for $z\in [q,v_i]$. So $x_i(z,{\bf v}_{-i})\geq 0$ for $z\in [\hat{\vecv}_{-i},q]$, thus
\begin{eqnarray*}
  p_i(v_i)&=&v_ix_i(v_i)-\int_{\hat{\vecv}_{-i}}^{v_i}x_i(z,{\bf v}_{-i})\dif z
  =v_ix_i(v_i)-\int_{\hat{\vecv}_{-i}}^{q}x_i(z,{\bf v}_{-i})\dif z-\int_{q}^{v_i}x_i(z,{\bf v}_{-i})\dif z\\
  &=&v_i\frac{B_i}{q}-\int_{\hat{\vecv}_{-i}}^{q}x_i(z,{\bf v}_{-i})\dif z-\frac{B_i}{q}(v_i-q)
  =B_i-\int_{\hat{\vecv}_{-i}}^{q}x_i(z,{\bf v}_{-i})\dif z\leq B_i.
\end{eqnarray*}
If $i=k+1$, then we claim that $\sum_{i=1}^kB_i\leq v_{k+1}\cdot \sum_{k+1}^n\Gamma_i$. Otherwise, $q=\frac{\sum_{i=1}^kB_i}{\sum_{i=k+1}^n\Gamma_i}$ and $x_{k+1}(v_{k+1})=\sum_{i=k+2}^n\Gamma_i-\sum_{i=1}^k\frac{B_i}{q}=-\Gamma_{k+1}<0$, which contradicts the condition of $x_i(v_i)>0$. In this case, $q=v_{k+1}$ and $x_{k+1}(v_{k+1})=\sum_{i=k+2}^n \Gamma_i-\sum_{i=1}^k\frac{B_i}{v_{k+1}}$. Due to part (1) of this Lemma, we have 
\begin{eqnarray*}
p_{k+1}(v_{k+1}) &\le& v_{k+1} x_{k+1}(v_{k+1}) = v_{k+1} \left( \sum_{i=k+2}^n \Gamma_i - \sum_{i=1}^k B_i / v_{k+1} \right) \\
&=& v_{k+1} \sum_{i=k+2}^n \Gamma_i - \sum_{i=1}^k B_i < B_{k+1},
\end{eqnarray*}
where the last inequality follows from the definition of $k$.

This lemma holds.

\subsection{Proof of~\cref{lem:boundC}}\label{sec:lem:boundC}
To prove \(\sum_{i \in [n]} -p_i \leq \sum_{i \in [n]} v_i x_i - p_i\), we only need to show that \(\sum_{i \in [n]} v_i x_i \ge 0\).
Since \(\sum_{i \in [n]} x_i = 0\), and the buyers' valuations are always higher than the sellers' valuations, it is evident that \(\sum_{i \in [n]} v_i x_i \ge 0\).

\section{Myerson's Lemma}

\begin{lemma}[Myerson's Lemma~\cite{myerson1981optimal}]\label{lem:Myerson}
The functions $(x_i(v_i, \vecv_{-i}), p_i(v_i, \vecv_{-i}))_{i=1}^n$ satisfy the incentive compatibility (IC) inequality, i.e. $\forall i, \forall v_i, \forall v_i', \forall \vecv_{-i}$, 
$$
v_i x_i(v_i, \vecv_{-i}) - p_i(v_i, \vecv_{-i}) \ge v_i x_i(v_i', \vecv_{-i}) - p_i(v_i', \vecv_{-i}),
$$
if and only if, given valuation profile $\vecv_{-i}$,
\begin{itemize}
    \item[(1)] $x_i(v_i, \vecv_{-i})$ is monotone non-decreasing in $v_i$;
    \item[(2)] $p_i(v_i, \vecv_{-i}) = v_i x_i(v_i, \vecv_{-i}) - \int_0^{v_i} x_i(z, \vecv_{-i}) \, \mathrm{d}z + h_i(\vecv_{-i})$, where $h_i(\vecv_{-i})$ is a term that is independent of $v_i$.
\end{itemize}
\end{lemma}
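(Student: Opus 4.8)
The plan is to fix an agent $i$ and a profile $\vecv_{-i}$ throughout and suppress them from the notation, writing $x(v) := x_i(v,\vecv_{-i})$, $p(v) := p_i(v,\vecv_{-i})$, and introducing the truthful utility $U(v) := v\,x(v) - p(v)$. With this notation the IC inequality says exactly that, for the true value $v$, reporting $v$ maximizes the function $v'\mapsto v\,x(v') - p(v')$. I would then prove the two implications separately.

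For the forward direction (IC $\Rightarrow$ (1)+(2)), I would start from the two IC constraints associated with a pair $v_1 < v_2$: the type-$v_1$ agent does not gain by reporting $v_2$, and symmetrically the type-$v_2$ agent does not gain by reporting $v_1$. Adding these two inequalities cancels the payment terms and leaves $(v_2 - v_1)\bigl(x(v_2) - x(v_1)\bigr) \ge 0$, which gives monotonicity and hence part (1). Keeping the same two inequalities \emph{separate} instead sandwiches the utility increment:
\[
(v_2-v_1)\,x(v_1) \;\le\; U(v_2) - U(v_1) \;\le\; (v_2-v_1)\,x(v_2).
\]
I would then telescope this over a partition $0 = t_0 < t_1 < \cdots < t_m = v$ and observe that the resulting lower and upper bounds are precisely the Darboux sums of the monotone function $x$. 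Since monotone functions are Riemann integrable, both sums converge to $\int_0^v x(z)\,\dif z$ as the mesh tends to zero, so $U(v) = U(0) + \int_0^v x(z)\,\dif z$. Rearranging yields $p(v) = v\,x(v) - \int_0^v x(z)\,\dif z + h_i(\vecv_{-i})$ with $h_i(\vecv_{-i}) = -U(0) = p(0)$, a quantity independent of $v$, which is exactly part (2).

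For the reverse direction ((1)+(2) $\Rightarrow$ IC), I would substitute the payment identity directly into the misreport utility. Using $U(v) = \int_0^v x(z)\,\dif z - h_i(\vecv_{-i})$ and the analogous expression at $v'$, a short computation collapses the constant $h_i(\vecv_{-i})$ and gives the clean form
\[
U(v) - \bigl(v\,x(v') - p(v')\bigr) \;=\; \int_{v'}^{v} \bigl(x(z) - x(v')\bigr)\,\dif z .
\]
Monotonicity then forces the integrand to have a fixed sign on the interval of integration in each of the cases $v > v'$ and $v < v'$ (and the expression is $0$ when $v=v'$), so the right-hand side is non-negative in every case, which is exactly the IC inequality.

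I expect the only delicate step to be the passage from the sandwich inequalities to the integral representation of $U$. The point I would be careful to make explicit is that $x$ is assumed merely monotone — not continuous or differentiable — so the argument must rely on Riemann integrability of monotone functions rather than on an envelope-theorem derivative computation; the Darboux-sum telescoping handles the possible jump discontinuities of $x$ cleanly, and this is precisely where the monotonicity established in the first part is reused. Everything else reduces to elementary algebra and sign analysis.
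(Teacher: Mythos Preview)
Your proof is correct and is essentially the standard argument for Myerson's Lemma. Note, however, that the paper does not actually give a proof of this statement: it is listed in the appendix as a cited result from~\cite{myerson1981optimal} and used as a black box in the proof of Lemma~\ref{thm:truthfulness}. So there is no ``paper's own proof'' to compare against; what you wrote is precisely the classical proof one would supply if asked to fill in the reference.
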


\section{Chernoff Bound}

Below is a form of the Chernoff Bound used in this paper.

\begin{lemma}[Chernoff Bound]\label{lem:chernoff}
    Let $x_1,...,x_n$ be $n$ independent random variable in $[0,1]$. Denote by $X=\sum_{i\in [n]}x_i$ their sum and by $\mu=\mathbb{E}[X]$ the expectation. For any $0<\delta<1$:
    $$
    \Pr\left[|X-\mu|\ge \delta\mu \right]\leq 2\exp{\left(-\frac{\delta^2\mu}{3}\right)}.
    $$

\end{lemma}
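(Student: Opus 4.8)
The plan is to prove the two-sided concentration bound by the standard exponential-moment (Chernoff) method, handling the upper and lower tails separately and then combining them with a union bound. First I would fix a parameter $t>0$ and apply Markov's inequality to the nonnegative random variable $e^{tX}$, writing $\Pr[X \ge (1+\delta)\mu] = \Pr[e^{tX} \ge e^{t(1+\delta)\mu}] \le \E[e^{tX}]/e^{t(1+\delta)\mu}$. Independence of $x_1,\dots,x_n$ factorizes the moment generating function as $\E[e^{tX}] = \prod_i \E[e^{tx_i}]$, so the problem reduces to bounding each factor.

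The key single-variable estimate exploits that each $x_i$ lies in $[0,1]$: since $z \mapsto e^{tz}$ is convex, the chord inequality gives $e^{tx_i} \le 1 + x_i(e^t - 1)$ on $[0,1]$, and taking expectations together with $1+a \le e^a$ yields $\E[e^{tx_i}] \le \exp(\E[x_i](e^t-1))$. Multiplying over $i$ gives $\E[e^{tX}] \le \exp(\mu(e^t-1))$, hence $\Pr[X \ge (1+\delta)\mu] \le \exp(\mu(e^t - 1 - t(1+\delta)))$. Optimizing the exponent over $t$ sets $t = \ln(1+\delta)$ and produces the classical form $(e^\delta/(1+\delta)^{1+\delta})^\mu$. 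The lower tail is treated symmetrically by applying Markov to $e^{-tX}$ for $t>0$, which after the analogous optimization yields $(e^{-\delta}/(1-\delta)^{1-\delta})^\mu$.

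It then remains to collapse these expressions into the clean stated form. I would prove the two scalar inequalities $\delta - (1+\delta)\ln(1+\delta) \le -\delta^2/3$ and $-\delta - (1-\delta)\ln(1-\delta) \le -\delta^2/2$, both valid for $0 < \delta < 1$, via the Taylor expansion of the logarithm (or by verifying that the corresponding auxiliary functions are nonpositive through sign analysis of their derivatives). These give $\Pr[X \ge (1+\delta)\mu] \le e^{-\delta^2\mu/3}$ and $\Pr[X \le (1-\delta)\mu] \le e^{-\delta^2\mu/2} \le e^{-\delta^2\mu/3}$. A union bound over the two disjoint tail events finally yields $\Pr[|X-\mu| \ge \delta\mu] \le 2e^{-\delta^2\mu/3}$, which is exactly the claim.

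The main obstacle is the last paragraph: the two transcendental inequalities controlling $\delta - (1+\delta)\ln(1+\delta)$ and $-\delta - (1-\delta)\ln(1-\delta)$ are where the sharpness and the elementary casework concentrate, and obtaining precisely the constants $1/3$ and $1/2$ (so that the weaker $1/3$ exponent uniformly dominates both tails) requires a careful calculus argument rather than a one-line bound. Everything upstream---Markov's inequality, the independence factorization, the convexity estimate on $[0,1]$, and the optimization over $t$---is routine and mechanical.
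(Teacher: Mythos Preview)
Your proposal is correct and follows the standard textbook derivation of the multiplicative Chernoff bound via the exponential-moment method. Note, however, that the paper does not actually provide its own proof of this lemma: it is stated in the appendix as a known auxiliary result without any accompanying argument, so there is nothing to compare against beyond observing that your write-up supplies the classical justification the paper omits.
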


\end{document}